\definecolor{deepred}{rgb}{0.5,0,0}
\definecolor{deepblue}{rgb}{0,0,0.5}
\definecolor{deepgreen}{rgb}{0,0.5,0}
\newcommand{\Int}{\mathbb{Z}}
\newcommand{\Rat}{\mathbb{Q}}
\newcommand{\Real}{\mathbb{R}}
\newcommand{\Field}{\mathbb{F}}
\newcommand{\dimm}{\text{dim}}
\newcommand{\Proj}{\text{pr}}
\newcommand{\Id}{\text{id}}
\newcommand{\Dim}[1]{\text{Dim}(#1)}
\newcommand{\Dr}[1]{\text{Der}(#1)}
\newcommand{\Ker}[1]{\text{ker}(#1)}
\newcommand{\Set}{\textsf{Set}}
\newcommand{\Vect}{\textsf{Vect}}
\newcommand{\Line}{\textsf{Line}}
\newtheorem{prop}{Proposition}
\numberwithin{prop}{section}
\newtheorem{thm}{Theorem}
\numberwithin{thm}{section}
\begin{document}

\title{Dimensioned Algebra:\\ the mathematics of physical quantities}

\author{Carlos Zapata-Carratala}
\date{}

\maketitle

\sloppy

\begin{abstract}
    A rigorous mathematical theory of dimensional analysis, systematically accounting for the use of physical quantities in science and engineering, perhaps surprisingly, was not developed until relatively recently. We claim that this has shaped current mathematical models of theoretical physics, which generally lack any explicit reference to units of measurement, and we propose a novel mathematical framework to alleviate this. Our proposal is a generalization of the usual categories of algebraic structures used to formulate physical theories (groups, rings, vector spaces...), herein dubbed \emph{dimensioned}, that can naturally articulate the structure of physical dimension. Our goal in the present work is not so much to define an algebraic theory of physical quantities -- this has already been done -- but to define a theory of algebra informed by how physical quantities are used in practice. We conclude by studying the dimensioned analogue of Poisson algebras in some detail due to their relevance in Jacobi geometry and classical mechanics. These topics are further explored in sequel papers by the author.
\end{abstract}

\tableofcontents

\newpage

\section{Introduction} \label{intro}

The origin of the concept of physical quantity is necessarily entangled with the (pre-)history of mathematics as a whole. Primitive number systems and units of measurement developed hand in hand throughout millennia \cite{coolidge2012numerosity}, with deep roots in ancient technology \cite{overmann2019tools}, land distribution \cite{duncan1980length} \cite{clagett1989ancient} and accountancy \cite{hodgkin2005history}. It was not until the advent of modern abstract mathematics at the end of the 19th century that we find a clear, definitive divide between quantities, resulting from measurements in physical reality, and numbers, as moving parts of purely formal algebraic frameworks \cite{roche1998mathematics}. The mathematical models of modern science, especially in physics, are heavily influenced by this division: they often rely on a wealth of formal structures -- largely lacking any explicit representation of \emph{measured} physical quantities -- which are then adapted ad hoc to incorporate numerical quantities and units of measurement to account for concrete experimental arrangements. In fact, it is standard practice in most such theories and formalisms to simplify physical quantities and to consider them merely as numerical values, i.e. elements of $\Rat$ or $\Real$, thereby completely omitting any reference to units of measurement.\newline

The study of physical quantities themselves, the so-called \emph{dimensional analysis} -- a term we also adopt -- first became a part of the physical science vernacular during the 19th century with the work of Fourier \cite{baron1822theorie}, with partial precedents in the writings of Descartes and Euler on mechanics \cite{nolte2018galileo}. Dimensional analysis didn't become part of the mainstream scientific discourse until the early 20th century with the works of Buckingham and Rayleigh. For a review of the history of dimensional analysis see \cite{macagno1971historico} \cite[Ch. 3]{zapata2019landscape}.\newline

Although systematic treatments of dimensional analysis were common since its inception \cite{subramanian1985vector} \cite{barenblatt1996scaling}, and despite the numerous philosophical disquisitions that have followed since \cite{kyburg1984theory} \cite{hale2002real} \cite{grozier2020should}, rigorous mathematical axiomatizations of dimensional analysis have only appeared relatively recently. The first efforts in developing a general mathematical theory of physical quantities are due to Hart \cite{hart2012multidimensional} in the 1980s. However, it was not until the recent work of Jany{\v{s}}ka-Modugno-Vitolo \cite{janyvska2007semi} \cite{janyvska2010algebraic} on semi-vector spaces and positive spaces that we find all the standard features of dimensional analysis in a transparent and mathematically rigorous framework. In a parallel development not directly motivated by dimensional analysis, Dolan-Baez found a characterisation of physical quantities as line objects in the context of category theory and algebraic geometry \cite{dolan2009doctrines}.\newline

If one compares the development timelines of dimensional analysis and classical mechanics, the latter being a reference for the development of modern physical sciences at large, one can clearly observe a lapse of about 100 years between the two disciplines (see Figure \ref{timeline}).\newline

\begin{figure}[h]
\centering
\includegraphics[scale=0.3]{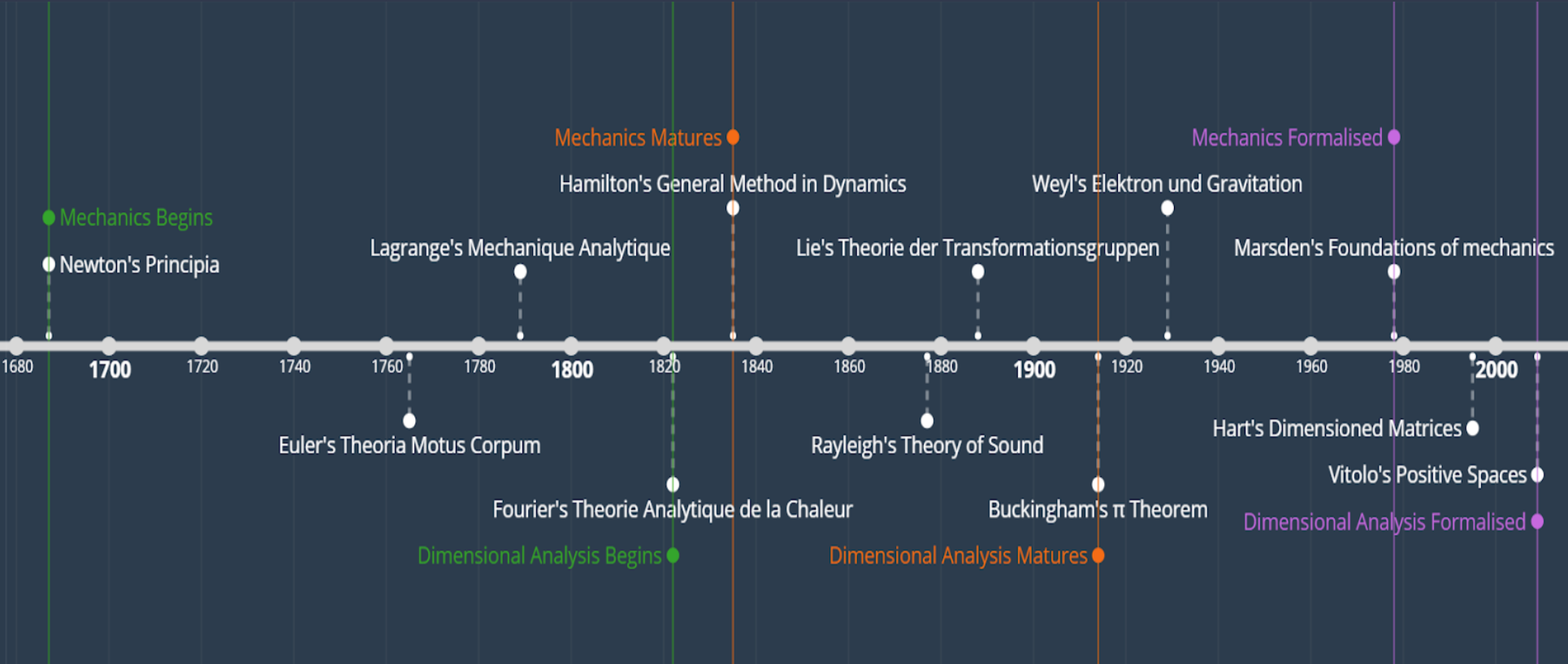}
\caption{A comparative timeline of the inception (green) and maturation (orange) of classical mechanics and dimensional analysis as measured by the bibliographical record. Mathematical formalisation (purple) is attributed to an axiomatization of the theory within modern mathematics. In the case of classical mechanics this corresponds to the development of geometric mechanics, which we link to the landmark works of Marsden and Arnold \cite{abraham1978foundations} \cite{arnold2013mathematical}.}
\label{timeline}
\end{figure}

This historical mismatch partially explains the lack of development in the mathematical foundations of metrology to this day, but there is another factor at play: the conventional system of extracting quantitative predictions from mathematical models, with all its folklore and ad hoc ways, just seems to work very well in practice. One could simply point to the International Bureau of Weights and Measures \cite{bimp2012metrology} as evidence for this fact. We are not here to reinvent the wheel; what we wish is to better understand the mathematical structure of physical quantities and how it fits with the usual formulations of modern physics.\newline

Our goal in the present work is not so much to define an algebraic theory of physical quantities -- this has been done quite successfully in \cite{janyvska2010algebraic} -- but to define a theory of algebra informed by how physical quantities are used in practice. To this end we shall motivate the definition of a generalised notion of set and binary operation and then proceed to develop an analogue of the ordinary theory of commutative algebra, discussing the generalizations of objects such as abelian groups, rings, modules and algebras.

\section{From Physical Quantities to Dimensioned Algebra} \label{dimmaths}

\subsection{The formal structure of dimensional analysis} \label{physicalquantity}

Consider the following high school science textbook exercise:\newline

``\textbf{Question}: A water dispenser has two taps, one dispenses $36.7\, \text{cm}^3/\text{s}$ and the other $2.1\, \text{L}/\text{min}$, how long will it take for an empty $300\, \text{cm}^3$ cup to fill up when placed under both taps? \textbf{Answer}: The flow through a pipe measures volume of fluid traversing the pipe per unit time, so combining both taps means that we should add the flows. To this end we write both flows in the same units using the appropriate conversion factors to find a combined flow of $2.2\, \text{L}/\text{min} + 2.1\, \text{L}/\text{min}=4.3\, \text{L}/\text{min}$. Setting $F=V/T$ with $F=4.3\, \text{L}/\text{min}$ and $V=300\, \text{cm}^3=0.3\, \text{L}$ we find $T=0.06\, \text{min}$ so the cup will fill up in approximately $4$ seconds, again using the appropriate conversion factors.''\newline

Observe how the relevant physical quantities, e.g. the volume of the cup, are specified by a numerical value ($300$) and a unit of measurement ($\text{cm}^3$) which is arbitrary within the class of units ($\text{cm}^3$, L, $\text{m}^3$...) measuring the same type of physical quantity (volume). We thus see that concrete physical quantities resulting from a practical measurement are equivalence classes of numerical values in all possible appropriate units.\newline

Another general observation is that algebraic operations correspond to specific physical phenomena, i.e. the addition of flows as physical quantities is correlated with the combination of water streams and the multiplication by rates is correlated with the passage of time. Although we will not delve further into this topic here, we note that this is rarely considered in the literature; most theoretical treatments simply postulate algebraic operations as formal devices without explicit connection to physical phenomena.\newline

The most relevant aspect to our discussion, however, is the peculiar way in which algebraic operations behave:
\begin{itemize}
    \item Addition can only be performed between quantities specified by the same unit of measurement and only affects the numerical part, addition is otherwise undefined.
    \item Multiplication can be performed between any two arbitrary physical quantities, it affects the numerical part and the units part.
    \item All algebraic operations are compatible with conversion factors that allow change between units of the same kind.
\end{itemize}
These are, of course, the usual \emph{rules of the game} of standard \textbf{dimensional analysis}. In general, one considers a family of $n$ physical quantity types, such as mass, time, volume, velocity, charge, etc., usually denoted by $[Q_1],\dots,[Q_n]$, and some system of units $u_1,\dots,u_n$, one for each quantity type. Two arbitrary physical quantities are commonly expressed as the following mathematical entities:
\begin{equation}
    Q=q\, u_1^{a_1}\cdots n_n^{a_n} \qquad P=p\, u_1^{b_1}\cdots n_n^{b_n}
\end{equation}
with numerical values $q,p\in \Real$ and exponents $a_i,b_i\in \Int$ (or perhaps $a_i,b_i\in\Rat$). Addition is only defined when $a_i=b_i$, in which case:
\begin{equation}
    Q+P=(q+p)\, u_1^{a_1}\cdots n_n^{a_n},
\end{equation}
and multiplication is generally defined via
\begin{equation}
    Q\cdot P = (qp)\, u_1^{a_1+b_1}\cdots n_n^{a_n+b_n}.
\end{equation}
From a mathematical point of view, the set of physical quantities is $\Real \times \Int^n$ with addition only partially defined in the first argument and multiplication defined as a direct product in both arguments. The $\Int^n$ component corresponds to the types of physical quantities and the domains of partial addition are precisely the subsets of matching type.

\subsection{Dimensioned Sets and Binars} \label{dimsets}

Following our discussion in Section \ref{physicalquantity}, we are led to consider `typed' or `labeled' sets as the primary objects of interest. Our approach is to formally define these `labelled sets' and to identify the natural categories that emerge from considering different compatibility conditions between the `labelling' structure and binary operations defined on the sets.\newline

The notion that physical quantities have `types' is captured simply by what we call a \textbf{dimensioned set} which is nothing but a surjection of sets $\delta:A\to D$. We call $D$ the \textbf{set of dimensions}, $\delta$ the \textbf{dimensionality projection} and the preimages $A_d:=\delta^{-1}(d)\subset A$ \textbf{dimension slices} of the set $A$. In the interest of brevity we may denote a dimensioned set $\delta:A\to D$ simply as $A_D$. A \textbf{morphism of dimensioned sets} or \textbf{dimensioned map} $\Phi_\varphi:A_D\to B_E$ is simply a morphism of surjections, that is, a commutative diagram:
\begin{equation}
\begin{tikzcd}
A \arrow[r, "\Phi"] \arrow[d, "\delta"'] & B \arrow[d, "\epsilon"] \\
D \arrow[r, "\varphi"'] & E
\end{tikzcd}
\end{equation}
The \textbf{category of dimensioned sets} is denoted by $\textsf{DimSet}$. The \textbf{cartesian product} of two dimensioned sets $A_D \times B_E$ is defined in the obvious way:
\begin{equation}
\begin{tikzcd}
A \times B \arrow[d, "\delta \times \epsilon"]\\
D \times E
\end{tikzcd}
\end{equation}
A distinguished singleton considered as a dimensioned set $\{\bullet\}\to \{\bullet\}$ clearly acts as a terminal object in $\textsf{DimSet}$ and as a unit for $\times$. The unit object is denoted by $1$ and will be called the \textbf{dimensionless set}. It is then easy to see that $(\textsf{DimSet},\times, 1)$ forms a cartesian monoidal category. Further assuming that $1$ is an initial object in $\textsf{DimSet}$ corresponds to assuming that all dimension sets have a distinguished element representing the `dimensionless' dimension; we shall see in later sections that it is sometimes useful to make such an assumption.\newline

It will be useful to introduce a notation that reflects the dimensioned structure explicitly so that we can keep track of the consistency of expressions. In what follows, unless redundant, elements of a dimensioned set will be denoted with a subscript indicating its dimension projection: for a dimensioned set $\delta:A\to D$ we denote
\begin{equation}
    a_d\in A \, \text{ where } \, d=\delta(a)\in D.
\end{equation}
With this convention, the action of a dimensioned map $\Phi_\varphi$ is notated $\Phi(a_d)=\Phi(a)_{\varphi (d)}$.\newline

Let us now discuss binary operations on dimensioned sets. For the remainder of this section it will be useful to think of dimensioned sets as disjoint unions of their dimension slices:
\begin{equation}
    A_D=\bigcup_{d\in D} A_d.
\end{equation}
General, possibly partially-defined, binary operations on dimensioned sets appear, in principle, considerably more nuanced that ordinary binary operations. Natural compatibility conditions between binary operations and the dimensioned structure may be imposed in multiple ways. Without any further choices, however, there are two canonical types of binar-like structure whose composition domains intersect with the dimension slices in extremal ways: on the one hand binar, we have structures whose composition domains are the dimension slices, and which will be called dimensional binars; and on the other, binar structures whose composition domain is the entire set together with a transitivity condition between dimension slices, and which will be called dimensioned binars. Intuitively, dimensional binars are operations \emph{strictly within} the dimension slices and dimensioned binars are operations \emph{strictly between} the dimension slices.\newline

Let a dimensioned set $\delta:A\to D$, then a \textbf{dimensional binar} structure $(A_D,*_D)$ is a partially-defined binary operation on $A$ with
\begin{equation}
    a *_d b \, \text{ defined only when } \, \delta(a)=\delta(a * b)=\delta(b)=d.
\end{equation}
This means that in the expression $a_d*_db_d$ all subscripts must agree for the product to be defined. In other words, a dimensional binar is a collection of ordinary binars indexed by the set of dimensions $\{(A_d,*_d), \, d\in D\}$. A \textbf{morphism of dimensional binars} $\Phi_\varphi:(A_D,*_D)\to (B_E,\circ_E)$ is a dimensioned map $\Phi_\varphi$ such that
\begin{equation}
    \forall \, d\in D, \, a,b\in A_d\qquad \Phi(a*_d b)=\Phi(a)\circ_{\varphi(d)} \Phi(b).
\end{equation}
Again, a morphism of dimensional binars can be regarded as a collection of ordinary morphisms of binars between the dimension slices:
\begin{equation}
    \{\Phi_d:(A_D,*_d)\to (B_{\varphi(d)},\circ_{\varphi(d)}), \, d\in D\}.
\end{equation}
The \textbf{category of dimensional binars} is denoted by $\textsf{DimBin}$. The cartesian product $\times$ on $\textsf{DimSet}$ extends to the category dimensional binars via the obvious construction: given two dimensional binars $(A_D,*_D)$ and $(B_E,\circ_E)$ there is a dimensional binar structure on the product $(A_D\times B_E, \bullet_{D\times E})$ where:
\begin{equation}
    (a_d,b_e)\bullet_{(d,e)} (a_d',b_e'):=(a_d*_da_d',b_e\circ_e b_e')
\end{equation}
for all $a_d\in A_D$ and $b_e\in B_E$.\newline

Interestingly, in direct analogy with ordinary sets and binars, the sets of morphisms of dimensional binars carry natural dimensional binar structure. A trivial observation is that morphisms of dimensioned sets, which are simply morphisms of surjections, have a natural surjection into maps of sets, i.e. $\pi: \Phi_\varphi \mapsto \varphi$. This means that for any two dimensioned sets $A_D$ and $B_E$ the set of morphisms together with the natural surjection $\pi: \textsf{DimSet}(A_D,B_E)\to \textsf{Set}(D,E)$ is a dimensioned set. Furthermore, when there are dimensional binar structures on the sets $(A_D,*_D)$ and $(B_E,\circ_E)$, a point-wise construction endows $\textsf{DimSet}(A_D,B_E)$ with a dimensional binar structure: given two morphisms $\Phi_\varphi, \Psi_\psi\in \textsf{DimSet}(A_D,B_E)$, for all $a_d\in A_d$ define
\begin{equation}
    \Phi_\varphi \circ \Psi_\psi (a_d):=\Phi(a)_{\varphi(d)} \circ_e \Psi(a)_{\psi(d)}
\end{equation}
which is indeed only possible when $\varphi(d)=e=\psi(d)$. Hence the point-wise operation so defined will endow the set of dimensioned maps with a dimensional binar structure:
\begin{equation}
    (\textsf{DimSet}(A_D,B_E)_{\textsf{Set}(D,E)},\,\circ_{\textsf{Set}(D,E)}).
\end{equation}

Let a dimensioned set $\delta:A\to D$, a \textbf{dimensioned binar} structure $(A_D,*^D)$ is a totally-defined binary operation $*$ on $A$ with
\begin{equation}
    \forall \, d,d'\in D, \, \exists!\, d''\in D \, : \qquad A_d* A_{d'}\subset A_{d''}.
\end{equation}
This condition is equivalent to the set of dimensions carrying a binar structure $(D,\,)$ (denoted by juxtaposition) and the dimension projection being a morphism of binars $\delta: (A,*)\to (D,\,)$. We thus write $a_d*b_e=(a*b)_{de}$, where the juxtaposition $de$ denotes de binar structure of the set of dimensions. A \textbf{morphism of dimensioned binars} between $(A_D,*^D)$ and $(B_E,\circ^E)$ is simply a dimensioned map $\Phi_\varphi:A_D\to B_E$ such that $\Phi:(A,*)\to (B,\circ)$ is a morphism of binars, since the binary operations are totally-defined. Note that this condition on $\Phi_\varphi$ makes $\varphi: (D,\,) \to (E,\,)$ into a morphism of binars.\newline

Let us now consider interactions between two (or more) binary operations on dimensioned sets. In the most general scenario, a set $A$ has two independent dimensioned structures $\delta: A\to D$ and $\epsilon: A\to E$ and two binary operations, one relative to the $D$ dimensions and the other to the $E$ dimensions. It is easy to see that a systematic analysis of compatibility conditions between two such binary operations becomes almost impossible to do in full generality due to the partial nature of the operations. The situation becomes much more tractable when only a single dimensioned structure is assumed on the set $A_D$ and the binary operations are all defined relative to it. This case, which we shall take for the remainder of this work, is further justified by the motivating example of physical quantities, where there is only one general notion of \emph{physical dimension} for a collection of measurable physical quantities.\newline

Focusing now on the case of interest of a dimensioned set $A_D$ with two binar structures, we find three possible cases depending on whether the operations are dimensional or dimensioned:
\begin{equation}
    \text{i)}\, (A_D,*_D,\circ_D) \qquad \text{ii)}\, (A_D,*^D,\circ^D) \qquad \text{iii)}\, (A_D,*_D,\circ^D).
\end{equation}
Cases i) and ii) in fact reduce to the ordinary theory of binars: collections of pairs of binar structures indexed by $D$ in the case of i) and two binar structures on $D$ in the case of ii). Case iii) seems to point at genuinely new possibilities for the interaction of two binar structures. This will indeed be confirmed by the results in the sections to follow.

\section{Dimensioned Rings} \label{dimring}

Our goal is to replicate the standard theory of algebraic structures, i.e. groups, rings, modules, etc., while attempting to account for the characteristic structure of physical quantities discussed in Section \ref{physicalquantity}. We shall see that developing such a theory is natural and straightforward when working with dimensioned sets and dimensioned binary operations as introduced in Section \ref{dimsets}.\newline

The theory of dimensional binars can be extended trivially to include familiar notions such as identity elements, associativity or invertibility. Consider a dimensional binar $(A_D,*_D)$ and some abstract property of binars $P$ (e.g. commutativity, existence of identity, etc.), we will say that $(A_D,*_D)$ \textbf{satisfies property} $P$ simply when the binar slices $(A_d,*_d)$ satisfy the property $P$ for all $d\in D$.\newline

Since we are aiming to identify ring-like structures where addition is partially-defined we begin by defining dimensional abelian groups. We say that $(A_D.+_D)$ is a \textbf{dimensional abelian group} when $(A_d,+_d)$ is an abelian group for all $d\in D$. Morphisms of dimensional abelian groups are simply morphisms of underlying dimensional binar structures. It follows from the general results for dimensional binars of Section \ref{dimsets} that the set of morphisms between two dimensional abelian groups $(A_D,+_D)$ and $(B_E,+_E)$ has the structure of a dimensional abelian group with dimension set given by the set of maps between the dimension sets $D$ and $E$. This will be called the set of \textbf{dimensioned morphisms} or \textbf{dimensioned maps} and we will denote it by $(\Dim{A_D,B_E}_{\text{Map}(D,E)},+_{\text{Map}(D,E)})$ or for the endomorphisms of a single dimensioned group $\Dim{A_D}_{\text{Map}(D)}:=\Dim{A_D,A_D}_{\text{Map}(D,D)}$. Subscripts will be omitted whenever they can be inferred from context. The \textbf{category of dimensional abelian groups} will be denoted by $\textsf{DimAb}$.\newline

Dimensional abelian groups display structures analogous to those of ordinary abelian groups. Firstly, subgroups, products and quotients can be naturally generalised to dimensional abelian groups. Let $(A_D,+_D)$ be a dimensional abelian group, then the subset $0_D:=\{0_d\in (A_d,+_d),\quad d\in D\}$ is called the \textbf{zero} of $A_D$. A subset $S\subset A_D$ is called a \textbf{dimensional subgroup} when $S_d:=S\cap A_d\subset (A_d,+_d)$ are subgroups for all $d\in D$. A dimensional subgroup $S\subset A_D$ is clearly a dimensional group with dimension set given by $\delta(S)$, where $\delta: A\to D$ is the dimension projection. We can define the \textbf{kernel} of a dimensional group morphism $\Phi:A_D\to B_E$ in the obvious way
\begin{equation}
    \Ker{\Phi}:=\{a_d\in A_D|\quad \Phi(a_d)=0_{\phi(d)}\}.
\end{equation}
Clearly, the zero $0_D$ and kernels of dimensional morphisms $\Ker{\Phi}\subset A_D$ are examples of dimensional subgroups. A dimensional subgroup $S\subset A_D$ whose slice intersections $S\cap A_d\subset (A_d,+_d)$ are normal subgroups also induces a natural notion of \textbf{quotient}:
\begin{equation}
    A_D/S:=\bigcup_{d\in \delta(S)} A_d/(S\cap A_d)
\end{equation}
which has an obvious dimensioned group structure with dimension set $\delta(S)$. There is also a natural notion of \textbf{product} of two dimensional groups $A_D$, $B_E$ given by the categorical product of dimensional binars $(A_D\times B_E, +_{D\times E})$. Furthermore, when we fix a dimension set $D$ and we consider \textbf{dimension-preserving morphisms}, i.e. dimensional group morphisms $\Phi:A_D\to B_D$ for which the induced map on the dimension sets is the identity $\Id_D:D\to D$, the dimensional abelian dimensional groups over $D$ form a subcategory $\textsf{DimAb}_D\subset\textsf{DimAb}$ that, in addition to the notions of subgroup, kernel and quotient, also admits a \textbf{direct sum} defined as $A_D\oplus_D B_D:=(A\times B)_D$ with partial multiplication given in the obvious way
\begin{equation}
    (a_d,b_d)+_d (a_d',b_d'):=(a_d+_d a_d',b_d +_d b_d').
\end{equation}
It is easy to prove that this direct sum operation on $\textsf{DimAb}_D$ acts as a product and coproduct for which the notions of kernel and quotient identified in the general category $\textsf{DimAb}$ satisfy the axioms of an abelian category. We call $\textsf{DimAb}_D$ the \textbf{category of $D$-dimensional abelian groups}. These  constructions are indeed identical to those commonly defined within the categories of abelian group bundles.\newline

The dimension structure of a dimensioned set $\delta: S\to D$ allows for an obvious generalization of the free abelian group construction: the \textbf{dimensional free abelian group} on a dimensioned set $S_D$ is defined as the dimensional group
\begin{equation}
    \Int[S_D]:=\bigcup_{d\in D} \Int[A_d]
\end{equation}
where $\Int[\,\,]$ denotes the usual free abelian group construction. The dimensioned version of the free abelian group can be applied to the cartesian product of dimensional abelian groups to obtain the natural generalization of the tensor product for dimensional abelian groups: let two dimensional abelian groups $(A_D,+_D)$ and $(B_E,+_D)$, their \textbf{tensor product} is defined simply as follows:
\begin{equation}
    A_D\otimes B_E := \bigcup_{(d,e)\in D\times E} A_d\otimes B_e
\end{equation}
where $\otimes$ denotes the ordinary tensor product of abelian groups.\newline

Having identified the structure of a partially-defined additive operation as a dimensional abelian group, we are now in the position to attempt a definition of dimensioned ring by considering a multiplicative operation together with the additive operation. In accord with the guiding example of physical quantities, let us consider a dimensional abelian group $(A_D,+_D)$ with a total multiplication $\cdot$, i.e. assume $(A, \cdot)$ is a monoid. The key axiom that characterises rings across multiple conventions is distributivity of multiplication with addition, however the partial nature of addition may present an obstruction to demanding distributivity in general. If one considers three elements $a_d,b_e,c_f\in A_D$ and attempts to write the (right) distributivity property:
\begin{equation}
    (a_d+b_e)\cdot c_f = a_d\cdot c_f + b_e\cdot c_f,
\end{equation}
it is clear that a few compatibility conditions in dimensions are required for distributivity to possibly hold in some generality within $A$. The fact that addition can only happen between elements of the same dimension means that in the above formula $d=e$ and that the dimension of $a_d\cdot c_f$ and $b_d\cdot c_f$ must be the same. Therefore, if we are to demand distributivity as generally as possible, the multiplicative operation must map transitively between dimension slices, in other words, the dimension of $a_d\cdot c_f$ only depends on $d$ and $f$. This is precisely the notion of a dimensioned binary operation introduced in Section \ref{dimsets} and thus we are left with a clear motivation to consider case iii) of the possible ways in which two binary operations interact on a dimensioned set.\newline

A \textbf{dimensioned ring} is a triple $(R_D,+_D,\cdot^D)$ where $R_D$ is a dimensioned set, $(R_D,+_D)$ is a dimensional abelian group, $(R_D,\cdot^D)$ is a dimensioned monoid and the distributivity condition
\begin{equation}
    (a+b)\cdot c = a\cdot c + b\cdot c \qquad c\cdot (a+b) = c\cdot a + c\cdot b
\end{equation}
holds whenever it is defined for $a,b,c\in R$. Recall from the definition of dimensioned binar in Section \ref{dimsets} that the dimension projection $\delta: R\to D$ becomes a morphism of binars and so the dimension set $D$ of a dimensioned ring $(R_D,+_D,\cdot^D)$ carries an associative unital binary operation $(D,\, )$, denoted by juxtaposition, such that $\delta: (R,\cdot)\to (D,\, )$ is a monoid morphism. With a slight abuse of notation we denote by $1$ the multiplicative identities of both $(R,\cdot)$ and $(D,\, )$, thus symbolically $1=\delta(1)$. The slice containing $1\in R$, or, equivalently, above $1\in D$ is called the \textbf{dimensionless slice} of the dimensioned ring $R_1\subset R$. A dimensioned ring is called \textbf{commutative} when the monoid structures are abelian. For the remainder of this text dimensioned rings and ordinary rings are assumed to be commutative unless otherwise stated.\newline

Let $(R_D,+_D,\cdot^D)$ and $(P_E,+_E,\cdot^E)$ be two dimensioned rings, a dimensioned map $\Phi:R_D\to P_E$ is called a \textbf{morphism of dimensioned rings} when
\begin{equation}
    \Phi(a\cdot b)=\Phi(a)\cdot \Phi(b), \qquad \Phi(1_R)=1_P
\end{equation}
for all $a,b\in R_D$. The map between the dimension monoids $\phi:D\to E$ is thus necessarily a monoid morphism. Dimensioned rings with these morphisms form the \textbf{category of dimensioned rings}, denoted by \textsf{DimRing}.\newline

The \textbf{product} of two dimensioned rings $R_D\times P_E$ is defined as the obvious extension of the product of dimensional abelian groups above and the product of ordinary monoids. A dimensioned subgroup $S\subset (R_D,+_D)$ is called a \textbf{dimensioned subring} when $S\cdot S\subset S$ and $1\in S$. A dimensioned subring $I\subset R_D$ is called a \textbf{dimensioned ideal} if for all elements $a_d\in R_D$ and $i_e\in I$ we have
\begin{equation}
    a_d\cdot i_e\in I\cap R_{de}.
\end{equation}
Note that the zero $0_D\subset R$ is an ideal since the dimensioned ring axioms imply that it acts as an absorbent set in the following sense:
\begin{equation}
    0_d\cdot a_e=0_{de}.
\end{equation}

\begin{prop}[Quotient Dimensioned Ring] \label{QuotDimRing}
Let $(R_D,+_D,\cdot^D)$ be a dimensioned ring and and $I\subset R_D$ a dimensioned ideal, then the quotient dimensioned group $R/I$ carries a canonical dimensioned ring structure such that the projection map:
\begin{equation}
    q: R\to R/I
\end{equation}
is a morphism of dimensioned rings. This construction is called the \textbf{quotient dimensioned ring}.
\end{prop}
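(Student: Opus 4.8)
The plan is to reduce this to the already-known construction of the quotient dimensioned group $R/I$ together with the classical fact that a quotient ring $R_d/(I\cap R_d)$ is well-defined modulo an ideal. Recall that, since $I$ is a dimensioned subgroup, the quotient $R/I$ was defined as $\bigcup_{d\in\delta(I)}R_d/(I\cap R_d)$; note first that one must check $\delta(I)=D$, which follows because $I$ contains $1$ and is closed under multiplication by all of $R$, hence for every $d\in D$ we have $R_d\cdot 1\subset I\cap R_d$ is nonempty once we observe $0_d\in I$ (the zero is contained in any dimensioned subgroup), so in fact the dimension set of $R/I$ is all of $D$. So the underlying dimensioned set of $R/I$ has the same dimension set $D$, and the dimension projection $\bar\delta:R/I\to D$ together with the induced binar structure on $D$ (inherited unchanged from $R$) is a candidate for a dimensioned monoid base.

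First I would define the multiplication on $R/I$ by $[a_d]\cdot[b_e]:=[a_d\cdot b_e]$, landing in the slice $(R/I)_{de}=R_{de}/(I\cap R_{de})$, and verify this is well-defined: if $a_d\sim a_d'$, i.e. $a_d-a_d'\in I\cap R_d$, then $a_d\cdot b_e - a_d'\cdot b_e=(a_d-a_d')\cdot b_e$, which lies in $I$ by the dimensioned ideal axiom $a_d\cdot i_e\in I\cap R_{de}$ applied with the roles suitably arranged (here $(a_d - a_d')\in I$ and $b_e\in R$, and since $I$ absorbs multiplication by arbitrary ring elements — which one checks follows from the ideal axiom plus commutativity, or is taken as the two-sided version), and it also lies in $R_{de}$ since $\delta$ is a binar morphism; likewise for the second argument. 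Thus the product descends. Then I would check that $(R/I,\cdot^D)$ is a dimensioned monoid: totality is immediate since $a_d\cdot b_e$ is always defined in $R$ and its class is always defined; the transitivity-of-slices condition holds because it held in $R$ and the projection $q$ intertwines the two binar structures on $D$ (indeed the same binar structure); associativity and the unit $[1]$ descend from $R$.

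Next I would verify distributivity in $R/I$. Given $[a_d],[b_d],[c_f]$ with matching first-two dimensions (so that $[a_d]+[b_d]$ is defined in the dimensional abelian group $R/I$), we have $([a_d]+[b_d])\cdot[c_f]=[(a_d+b_d)\cdot c_f]=[a_d\cdot c_f+b_d\cdot c_f]=[a_d\cdot c_f]+[b_d\cdot c_f]$, using that $q$ is already a morphism of dimensional abelian groups and that distributivity holds in $R$ whenever defined — and it is defined here precisely because the dimension bookkeeping matches on both sides. The left-distributivity is symmetric. Finally I would observe that $q:R\to R/I$ is by construction a dimensioned map over $\Id_D$, is a morphism of dimensional abelian groups (this is part of the quotient dimensioned group construction), satisfies $q(a\cdot b)=q(a)\cdot q(b)$ by the very definition of the multiplication on $R/I$, and sends $1\mapsto[1]$; hence it is a morphism of dimensioned rings.

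I do not expect a serious obstacle here — the proposition is a routine transport of the classical quotient-ring argument through the dimensioned formalism. The one point that requires a moment's care, and which I would highlight explicitly, is the well-definedness of multiplication: it hinges on the dimensioned ideal axiom being genuinely strong enough, i.e. that $a_d\cdot i_e\in I$ for \emph{all} $a_d\in R_D$ (not merely for $a$ in some slice), so that the congruence is respected across all dimension slices simultaneously; and, relatedly, on confirming that the quotient's dimension set is all of $D$ rather than a proper subset, so that the ambient binar structure on $D$ and hence the dimensioned monoid structure pass down intact. Everything else — associativity, unitality, distributivity, the morphism property of $q$ — is inherited slicewise or globally from $R$ with no new content.
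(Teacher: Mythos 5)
Your proposal is correct and follows essentially the same route as the paper: form the quotient dimensioned abelian group, define the product on cosets by $[a_d]\cdot[b_e]:=[a_d\cdot b_e]$, check well-definedness via the absorption axiom of the dimensioned ideal (with commutativity supplying the other side), and let associativity, unitality, distributivity and the morphism property of $q$ descend from $R$. The only divergence is minor: the paper keeps the quotient's dimension set as the submonoid $\delta(I)\subset D$ rather than arguing it is all of $D$, and your first justification for $\delta(I)=D$ (that $1\in I$ and $I$ absorbs multiplication by all of $R$) would actually force $I=R$ and trivialise the quotient --- the usable observation is simply that each slice $I\cap R_d$ is a subgroup of $(R_d,+_d)$ and hence contains $0_d$.
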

\begin{proof}
Let us denote dimension slices of the ideal by $I_d:= I\cap R_d$. From the construction of quotient dimensioned group we see that the projection map $q: R\to R/I$ is explicitly given by
\begin{equation}
    a_d\mapsto a_d + I_d,
\end{equation}
which makes $R/I$ into a dimensioned abelian group with dimension set $\delta (I)\subset D$. The dimensioned ring multiplication on the quotient can be explicitly defined by:
\begin{equation}
    (a_d+_dI_d)\cdot (b_e+_eI_e)=a_d\cdot b_e +_{de} a_d\cdot I_e +_{de} b_e\cdot I_d +_{de} I_d\cdot I_e=a_d\cdot b_e +_{de} I_{de}.
\end{equation}
This is easily checked to be well-defined and to inherit all the dimensioned ring multiplication properties from $R_D$. The map $q$ is then a morphism of dimensioned rings by construction. Note that the quotient ring has dimension projection $\delta': R/I \to \delta(I)$ thus, in particular, $\delta(I)\subset D$ is a submonoid.
\end{proof}

A \textbf{unit} or \textbf{choice of units} $u$ in a dimensioned ring $R_D$ is a section of the dimension projection
\begin{equation}
\begin{tikzcd}[row sep=small]
R  \arrow[d, "\delta"'] \\
D \arrow[u, "u"', bend right=60] 
\end{tikzcd}\quad \delta \circ u =\Id_D,
\qquad \text{ such that } \quad
    u_{de}=u_d\cdot u_e \quad \text{and} \quad u_d\neq 0_d
\end{equation}
for all $d,e\in D$. In other words, a unit is a splitting $u:(D,\,)\to (R,\cdot)$ of the monoid surjection $\delta: (R,\cdot)\to (D,\,)$ with non-zero image. Units can be regarded as the dimensioned generalization of the notion of non-zero element of a ring with the caveat that they may not exist due to the non-vanishing condition being required for all of $D$. It was noted above that vector bundles are a form of extreme example of dimensioned rings; in this vein, considering the Moebius band as a dimensioned ring with dimension set the circle $\text{S}^1$ and the zero multiplication operation, we find an explicit example of a dimensioned ring that does not admit units, since they would correspond to global non-vanishing sections of a non-trivialisable vector bundle.\newline

It turns out that all dimensioned rings carry ordinary ring structures within them.

\begin{prop}[Dimensionless Ring] \label{DimLessRing}
Let $(R_D,+_D,\cdot^D)$ be a dimensioned ring, then its dimensionless slice $R_1$ carries a natural ring structure induced from the partial addition defined on the slice $+_1$ and the restriction of the total multiplication $\cdot|_{R_1}$. This is called the \textbf{dimensionless ring} $(R_1,+_1,\cdot|_{R_1})$. Furthermore, morphisms, products and quotients of dimensioned rings induce their analogous counterparts for dimensionless rings.
\end{prop}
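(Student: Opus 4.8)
The plan is to verify directly that the dimensionless slice $R_1 = \delta^{-1}(1)$ satisfies the ordinary ring axioms, using the fact that $1 \in D$ is the multiplicative identity of the dimension monoid $(D,\,)$. First I would observe that $(R_1, +_1)$ is already an abelian group: this is immediate, since a dimensional abelian group is by definition a collection of abelian groups indexed by $D$, one of which is the slice over $1$. Next I would check that multiplication restricts to a total operation on $R_1$: if $a_1, b_1 \in R_1$ then $a_1 \cdot b_1 \in R_{1\cdot 1} = R_1$ because $1$ is idempotent in the dimension monoid (indeed it is the identity), so $\cdot|_{R_1}$ is well-defined and totally defined on $R_1$. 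Associativity and the existence of the multiplicative identity $1 \in R_1$ are inherited verbatim from the dimensioned monoid structure $(R_D, \cdot^D)$, and commutativity follows from our standing assumption that dimensioned rings are commutative.

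The one genuinely substantive point is distributivity. In a dimensioned ring, distributivity $(a+b)\cdot c = a\cdot c + b\cdot c$ is only asserted to hold \emph{whenever it is defined}, so I must confirm that for $a_1, b_1, c_1 \in R_1$ the relevant sums are in fact defined. The sum $a_1 +_1 b_1$ is defined because both lie in the slice over $1$; the products $a_1 \cdot c_1$ and $b_1 \cdot c_1$ both land in $R_1$ by the idempotency argument above, so their sum is defined; and $(a_1 + b_1)\cdot c_1$ lands in $R_1$ as well. Hence all three terms in the distributivity identity live in the single slice $R_1$, the identity is ``defined'' in the sense of the dimensioned ring axiom, and it therefore holds. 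This shows $(R_1, +_1, \cdot|_{R_1})$ is an ordinary (commutative, unital) ring.

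For the ``furthermore'' clause I would argue functorially. Given a morphism of dimensioned rings $\Phi_\varphi : R_D \to P_E$, the induced dimension map $\varphi:(D,\,)\to(E,\,)$ is a monoid morphism and hence sends $1_D \mapsto 1_E$; consequently $\Phi$ restricts to a map $R_1 \to P_1$, and since $\Phi$ preserves $+$ (where defined) and $\cdot$ and the unit, this restriction is an ordinary ring homomorphism. For products, one checks that the dimensionless slice of $R_D \times P_E$ is the slice over $(1_D, 1_E)$, which is exactly $R_1 \times P_1$ with the product ring structure. For quotients, given a dimensioned ideal $I \subset R_D$ with $1 \in \delta(I)$, Proposition \ref{QuotDimRing} gives the quotient dimensioned ring $R/I$ with dimension monoid $\delta(I) \ni 1$, and its dimensionless slice is $R_1/(I \cap R_1) = R_1/I_1$; since $I_1 = I \cap R_1$ is readily seen to be an ordinary ideal of $R_1$ (closed under $+_1$, and absorbing under $\cdot|_{R_1}$ because $a_1 \cdot i_1 \in I \cap R_1$ by the dimensioned-ideal axiom with $d = e = 1$), this is exactly the ordinary quotient ring. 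I do not expect any real obstacle here; the only place where care is needed is the bookkeeping ensuring that every partially-defined operation invoked is genuinely defined on the slice over $1$, which reduces in each case to the single fact that $1$ is the identity of the dimension monoid.
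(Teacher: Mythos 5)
Your proposal is correct and follows essentially the same route as the paper's own proof: everything rests on the single observation that $1\in D$ is the monoid identity, so $R_1$ is closed under multiplication and all the relevant sums and products stay in the slice over $1$, with morphisms, products and quotients handled by restricting to that slice. Your write-up is simply a more detailed unpacking (notably of the distributivity bookkeeping and of why $I\cap R_1$ is an ordinary ideal) of the argument the paper states tersely.
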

\begin{proof}
The fact that $(R_1,+_1,\cdot|_{R_1})$ is a ring stems simply from the fact that $R_1$ is closed under multiplication, since $1\in D$ is the monoid identity. From a similar reasoning we see that for any dimensioned ideal $I\subset R_D$ the dimensionless slice $I_1:= I\cap R_1$ is an ideal of the dimensionless ring. Let two dimensioned rings $(R_D,+_D,\cdot^D)$ and $(P_E,+_E,\cdot^E)$. Since a morphism $\Phi: R_D \to P_E$ preserves the multiplicative identities, it is clear that the restriction
\begin{equation}
    \Phi|_{R_1}:(R_1,+_1,\cdot|_{R_1})\to (P_{\phi(1)},+_{\phi(1)},\cdot|_{P_{\phi(1)}})
\end{equation}
is a morphism of rings. The dimensionless slice of the dimensioned product $R_D \times P_E$ is indeed $R_1\times P_1$ with the direct product construction applying to rings in a straightforward manner.
\end{proof}
This shows that dimensioned rings are, in fact, a strict generalization of ordinary rings since we recover them by considering trivial dimension monoids, i.e. singleton dimension sets. More precisely, the category of rings is a subcategory of the category of dimensioned rings $\textsf{Ring}\subset \textsf{DimRing}$.\newline

Thus far we have seen that abelian group bundles and ordinary rings are extreme examples of dimensioned rings: the former by taking the zero multiplication and the latter by taking the dimension set to be a singleton. A natural example of dimensioned ring that is somewhat intermediate to the aforementioned two is what we call a \textbf{product dimensioned ring}: let $(R,+,\cdot)$ be a ring and $(D,\,)$ a monoid, then the cartesian product $R\times D$ carries a natural dimensioned ring structure defined in the obvious way
\begin{equation}
    \Proj_2:R\times D\to D \qquad (a,d)+_d(b,d):=(a+b,d), \qquad (a,d)\cdot (b,e):=(a\cdot b,de).
\end{equation}
The dimensionless ring of a product dimensioned ring is simply $(R\times D)_1=R\times \{1\}$ and a unit in $R\times D$ is given by a monoid morphism $u:(D,\,) \to (R,\cdot)$ such that $u(d)\neq 0$ for all $d\in D$. Note that a product dimensioned ring always admits a unit given by the constant map $1: D \to R$ such that $1(d)=1$ for all $d\in D$.\newline

The first non-trivial example of a dimensioned ring is the set of dimensioned maps of a dimensional abelian group.

\begin{prop}[Endomorphism Ring] \label{DimEndRing}
Let $(A_D,+_D)$ be a dimensional abelian group, then the set of dimensioned maps $\normalfont\Dim{A_D}_{\text{Map}(D)}$ carries a dimensioned (non-commutative) ring structure
\begin{equation}
\normalfont
    (\Dim{A_D}_{\text{Map}(D)},+_{\text{Map}(D)},\circ^{\text{Map}(D)})
\end{equation}
where $+$ denotes the dimensional abelian addition and $\circ$ is composition.
\end{prop}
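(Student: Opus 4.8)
The plan is to verify that the three pieces of structure already constructed in the excerpt — the dimensional abelian group structure $+_{\text{Map}(D)}$ on $\Dim{A_D}_{\text{Map}(D)}$ (guaranteed by the general result on morphisms of dimensional binars applied to $\textsf{DimAb}$), the composition operation $\circ$, and their interaction — assemble into a dimensioned ring in the sense of the definition just given. So I would proceed in four steps: (1) confirm $(\Dim{A_D}_{\text{Map}(D)}, +_{\text{Map}(D)})$ is a dimensional abelian group with dimension set $\text{Map}(D)$; (2) show composition $\circ$ is a \emph{dimensioned} (not dimensional) binary operation, i.e. totally defined on $\Dim{A_D}$ and transitive between dimension slices, with the dimension-set binar being composition of self-maps $\text{Map}(D)$; (3) check $(\Dim{A_D}_{\text{Map}(D)}, \circ^{\text{Map}(D)})$ is a dimensioned monoid with identity $\Id_A{}_{\Id_D}$; (4) verify the two distributivity laws hold whenever defined.

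For step (1) I would simply cite the general construction from Section~\ref{dimsets}: since $(A_D,+_D)$ is a dimensional abelian group, the point-wise addition on $\textsf{DimSet}(A_D,A_D)$ defined by $(\Phi_\varphi +_{\text{Map}(D)} \Psi_\psi)(a_d) := \Phi(a)_{\varphi(d)} +_{e} \Psi(a)_{\psi(d)}$ is defined precisely when $\varphi = \psi$ as maps $D\to D$, endowing $\Dim{A_D}_{\text{Map}(D)}$ with a dimensional binar structure; that each slice $\Dim{A_D}_\varphi$ (the dimensioned maps over a fixed $\varphi$) is an abelian group follows slice-wise from the abelian group axioms of each $(A_d,+_d)$. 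For step (2), given $\Phi_\varphi,\Psi_\psi \in \Dim{A_D}$, the composite $\Phi \circ \Psi$ is an honest dimensioned map with underlying dimension map $\varphi\circ\psi$, so $(\Phi\circ\Psi)(a_d) = \Phi(\Psi(a))_{\varphi(\psi(d))}$; composition is everywhere defined, and it sends the slice over $\psi$ composed (on the left) with the slice over $\varphi$ into the slice over $\varphi\circ\psi$ — exactly the transitivity condition $\Dim{A_D}_\varphi \circ \Dim{A_D}_\psi \subset \Dim{A_D}_{\varphi\circ\psi}$ — so $\circ$ is dimensioned with dimension binar $(\text{Map}(D), \circ)$. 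Step (3) is immediate: composition of maps is associative, and the identity map $\Id_A$, which covers $\Id_D$, is a two-sided unit, so $(\Dim{A_D}, \circ)$ is a monoid and $\delta: \Phi_\varphi \mapsto \varphi$ is a monoid morphism onto $(\text{Map}(D),\circ)$.

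Step (4) is the only place where a genuine (if short) computation is needed, and it is the natural candidate for the ``main obstacle,'' though in truth the obstacle is mild: one must check that whenever $\Phi_\varphi +_{\text{Map}(D)} \Psi_\varphi$ is defined (i.e. $\varphi = \psi$) the distributivity identities $(\Phi_\varphi + \Psi_\varphi)\circ \Xi_\xi = \Phi_\varphi\circ\Xi_\xi + \Psi_\varphi\circ\Xi_\xi$ and $\Xi_\xi\circ(\Phi_\varphi+\Psi_\varphi) = \Xi_\xi\circ\Phi_\varphi + \Xi_\xi\circ\Psi_\varphi$ hold in $\Dim{A_D}$. Right distributivity is automatic and purely set-theoretic: evaluating both sides at $a_d$ gives $\Phi(\Xi(a)) +_{\varphi\xi(d)} \Psi(\Xi(a))$ on each side, with matching dimensions since $\varphi = \psi$. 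Left distributivity is where the hypothesis that each $\Xi_\xi$ is a \emph{morphism} of dimensional abelian groups is used: $\Xi(\Phi(a)_{\varphi(d)} +_{\varphi(d)} \Psi(a)_{\varphi(d)}) = \Xi(\Phi(a))_{\xi\varphi(d)} +_{\xi\varphi(d)} \Xi(\Psi(a))_{\xi\varphi(d)}$, which is exactly the additivity of $\Xi$ on the slice $A_{\varphi(d)}$. One should also note that both sides of each identity have the \emph{same} domain of definition, so the equation-whenever-defined clause is satisfied. Assembling (1)--(4) gives the dimensioned ring axioms; non-commutativity is inherited from the fact that composition of self-maps is non-commutative already on $\text{Map}(D)$. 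I would close by remarking that restricting to the dimensionless slice — the endomorphisms over $\Id_D$ — recovers, via Proposition~\ref{DimLessRing}, the ordinary endomorphism ring of the abelian group $(A_1,+_1)$, consistent with the classical picture.
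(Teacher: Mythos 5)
Your proposal is correct and follows essentially the same route as the paper's proof: cite the general point-wise dimensional abelian group structure on dimensioned maps, observe that composition covers composition of dimension maps and is therefore a dimensioned binary operation, and verify distributivity from the point-wise definition of addition together with the fact that the maps are morphisms of dimensional abelian groups. Your write-up is merely more explicit (separating the two distributivity laws and noting which one uses additivity of the pre-composed map), which is a useful elaboration but not a different argument.
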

\begin{proof}
The set of dimensioned maps $\Dim{A_D}_{\text{Map}(D)}$ has a canonical dimensional abelian group structure as shown for the category of dimensional binars in Section \ref{dimsets}. Note that the dimension set, the maps from $D$ into itself $\text{Map}(D)$, carries a natural monoid structure given by composition of maps $\circ$. Let the dimensioned maps $\Phi_\phi,\Theta_\phi, \Psi_\psi:A_D\to A_D$ where $\phi,\psi:D\to D$ are the dimension maps. It follows by construction that
\begin{equation}
    \Phi_\phi \circ \Psi_\psi = (\Phi \circ \Psi)_{\phi\circ \psi},
\end{equation}
so we see that composition is indeed a dimensioned binary operation $\circ^{\text{Map}(D)}$. It only remains to check the distributivity property:
\begin{equation}
    (\Phi_\phi +_\phi \Theta_\phi )\circ \Psi_\psi= \Phi_\phi\circ \Psi_\psi +_{\phi\circ \psi} \Theta_\phi \circ \Psi_\psi \qquad \Psi_\psi \circ (\Phi_\phi +_\phi \Theta_\phi ) = \Psi_\psi \circ \Phi_\phi +_{\psi \circ \phi} \Psi_\psi \circ \Theta_\phi
\end{equation}
which follows from the point-wise definition of addition and the fact that the maps are dimensional abelian group morphisms.
\end{proof}

Division in dimensioned rings is formally analogous to division in ordinary rings since the multiplication operation is totally defined. An element of a dimensioned ring $a\in R_D$ is said to be \textbf{invertible} if there exists a (necessarily unique) element $1/a\in R_D$, called its \textbf{reciprocal}, such that $a\cdot 1/a = 1$. Subtleties of the dimensioned case appear, however, when considering the notion of \textbf{zero divisor} as an element $z\in R_D$ such that there exists a $z'\in R_D$ with $z\cdot z' \in 0_D$. We will not delve further into these questions here.\newline

A dimensioned ring $R_D$ is called a \textbf{dimensioned field} when all non-zero elements are invertible. Note that for this requirement to be consistent with the dimension projection $\delta: R\to D$, the monoid structure on $D$ must be a group. A direct consequence of the defining condition of dimensioned field is that non-zero elements induce bijective maps between dimension slices. Indeed, for a non-zero element $0_d\neq a_d\in R_D$ we have the following induced maps called \textbf{slice-wise multiplications}:
\begin{align}
a_d\cdot : R_e & \to R_{de}\\
b_e & \mapsto a_d\cdot b_e
\end{align}
for all $e\in D$. The distributivity axiom implies that these are slice-wise abelian group isomorphisms with inverse given by $1/(a_d)\cdot$. These maps allow to prove a general result that confers a role to choices of unit on dimensioned fields similar to that of a trivialization of a fibre bundle.

\begin{prop}[Units in Dimensioned Fields] \label{UnitsDimFields}
Let $(R_D,+_D,\cdot)$ be a dimensioned field, then a choice of units $u:D\to R$ induces an isomorphism with the product dimensioned field:
\begin{equation}
    R_D\cong R_1\times D.
\end{equation}
\end{prop}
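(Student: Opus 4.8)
The plan is to use the slice-wise multiplication maps induced by the unit section $u$ to build the isomorphism explicitly, then verify compatibility with all the dimensioned-field structure. First I would define the candidate map $\Psi_{\Id}: R_D \to R_1 \times D$ on an element $a_d \in R_d$ by sending it to the pair $(a_d \cdot 1/(u_d), d)$, noting that $a_d \cdot 1/(u_d)$ lives in $R_{d\,d^{-1}} = R_1$ since $D$ is a group (consistency of the dimensioned-field axiom with $\delta$, as observed just before the statement), so this is well-defined and covers the identity map on the dimension set $D$. The inverse map $\Xi: R_1 \times D \to R_D$ should be $(r, d) \mapsto r \cdot u_d$, which lands in $R_{1\cdot d} = R_d$; that $\Xi \circ \Psi_{\Id} = \Id$ and $\Psi_{\Id}\circ\Xi = \Id$ is then immediate from $u_d \cdot 1/(u_d) = 1$ and associativity of $\cdot$.

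Next I would check that $\Psi_{\Id}$ is a morphism of dimensioned rings in the sense defined earlier. For the additive (dimensional abelian group) part: on each slice $R_d$, the map $a_d \mapsto a_d \cdot 1/(u_d)$ is precisely a slice-wise multiplication, hence a group isomorphism $R_d \to R_1$ by the distributivity-driven argument given in the paragraph preceding the proposition; so $\Psi_{\Id}$ restricts slice-by-slice to additive isomorphisms onto $R_1 \times \{d\}$, which is exactly the condition for a morphism in $\textsf{DimAb}_D$. For the multiplicative part I would compute, for $a_d \in R_d$ and $b_e \in R_e$, that $\Psi_{\Id}(a_d \cdot b_e) = \big((a_d\cdot b_e)\cdot 1/(u_{de}),\, de\big)$ and compare with the product in $R_1 \times D$ of $\big(a_d\cdot 1/(u_d), d\big)$ and $\big(b_e \cdot 1/(u_e), e\big)$, which is $\big((a_d \cdot 1/(u_d))\cdot(b_e\cdot 1/(u_e)),\, de\big)$; these agree precisely because $u_{de} = u_d \cdot u_e$ forces $1/(u_{de}) = 1/(u_d)\cdot 1/(u_e)$, and then commutativity and associativity of the total multiplication rearrange the factors. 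Finally $\Psi_{\Id}(1) = (1 \cdot 1/(u_1), 1) = (1,1)$ since $u_1 = u_{1\cdot 1} = u_1 \cdot u_1$ in a field forces $u_1 = 1$; so units are preserved.

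The routine bookkeeping is checking that the dimension-level map is the identity monoid morphism on $D$ (so that the isomorphism is in fact an isomorphism over $D$, matching the product dimensioned field $R_1 \times D$ with its projection $\Proj_2$) and that $\Xi$ is likewise a dimensioned-ring morphism — but since we have a two-sided inverse that is simultaneously additive and multiplicative, $\Xi$ being a morphism follows formally. The one genuine point requiring care, which I would flag as the main obstacle, is justifying the existence and multiplicativity of the reciprocals $1/(u_d)$ uniformly in $d$: the dimensioned-field axiom guarantees each $u_d$ is invertible with a unique reciprocal $1/(u_d) \in R_{d^{-1}}$, and one must check $1/(u_{de}) = 1/(u_d)\cdot 1/(u_e)$ from uniqueness of reciprocals together with $u_{de}=u_d\cdot u_e$ — this is where the monoid-is-a-group hypothesis on $D$ and the section property of $u$ both get used, and it is the hinge on which well-definedness of the whole construction turns.
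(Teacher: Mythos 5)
Your proposal is correct and follows essentially the same route as the paper: the paper defines $\Phi_u:R_1\times D\to R_D$, $(r,d)\mapsto u_d\cdot r$, with inverse $a_d\mapsto u_{d^{-1}}\cdot a_d$, and checks multiplicativity from $u_{de}=u_d\cdot u_e$ together with commutativity — your map is just this isomorphism written in the opposite direction, with $1/(u_d)=u_{d^{-1}}$ identified via uniqueness of reciprocals. Your additional checks (slice-wise additivity, $u_1=1$, and multiplicativity of reciprocals) are details the paper leaves implicit, and they are all correct.
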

\begin{proof}
A choice of units induces the following map via slice-wise multiplication:
\begin{align}
\Phi_u: R_1\times D & \to R_D\\
(r,d) & \mapsto u_d\cdot r
\end{align}
This is shown to be a bijection by explicitly constructing its inverse $\Phi_u^{-1}(a_d):=u_{d^{-1}}\cdot a_d$. It only remains to check that $\Phi_u$ is dimensioned ring morphism; this follows directly by construction and the fact that $u$ is a morphism of monoids:
\begin{align}
    \Phi^u((r_1,d)\cdot (r_2,e))&= \Phi^u((r_1\cdot r_2,de))=u_{de}\cdot r_1\cdot r_2=u_d\cdot u_e\cdot r_1\cdot r_2 =\\
    &=(u_d\cdot r_1)\cdot (u_e\cdot r_2)=\Phi^u(u_d\cdot r_1)\cdot \Phi^u(u_e\cdot r_2).
\end{align}
\end{proof}
This last proposition shows that the dimensioned fields for which choices of units exist are (non-canonically) isomorphic to the product dimensioned fields $F\times D$ with $F$ an ordinary field and $D$ an abelian group.

\section{The Power Functor} \label{power}

In this section we describe an important class of examples of dimensioned rings arising from ordinary 1-dimensional vector spaces. These examples are important for two reasons: on the one hand, mathematically, they constitute a large class of natural non-trivial examples of dimensioned rings, on the other they capture the standard structure of physical quantities described in Section \ref{physicalquantity} precisely.\newline

We identify \textbf{the category of lines}, $\Line$, as a subcategory of vector spaces over a field $\Vect_{\mathbb{F}}$. Objects are vector spaces of dimension 1, a useful way to think of these in the context of the present work is as sets of numbers without the choice of a unit. An object $L\in\Line$ will be appropriately called a \textbf{line}. A morphism in this category $b\in\text{Hom}_{\Line}(L,L')$, usually simply denoted by $b:L\to L'$, is an invertible (equivalently non-zero) linear map. Composition in the category $\Line$ is simply the composition of maps. If we think of $L$ and $L'$ as numbers without a choice of a unit, a morphism $b$ between them can be thought of as a unit-free conversion factor, for this reason we will often refer to a morphism of lines as a \textbf{factor}. We consider the field $\Field$, trivially a line when regarded as a vector space, as a singled out object in the category of lines $\Field\in \Line$.\newline

It is a simple linear algebra fact that any two lines $L,L'\in\Line$ satisfy
\begin{equation}
    \dimm (L \oplus L')= \dimm L + \dimm L' = 2 > 1, \qquad \dimm L^* = \dimm L = 1, \qquad \dimm (L \otimes L') = 1.
\end{equation}
Then, we note that the direct sum $\oplus$, is no longer defined in $\Line$, however, it is straightforward to check that $(\Line, \otimes, \mathbb{F})$ forms a symmetric monoidal category and that $*:\Line \to \Line$ is a duality contravariant autofunctor. Let us introduce the following notation:
\begin{equation}
    \begin{cases} 
      L^n:=\otimes^nL & n>0 \\
      L^n:= \Field & n=0 \\
      L^n:=\otimes^nL^* & n<0 
   \end{cases}
\end{equation}
which is such that given two integers $n,m\in\Int$ and any line $L\in\Line$ the following equations hold
\begin{equation}
    (L^n)^* = L^{-n} \qquad L^{n}\otimes L^{m} = L^{n+m}.
\end{equation}
Thus we see how one single line and its dual $L,L^*\in\Line$ generate an abelian group with the tensor product as group multiplication, the patron $\Field\in\Line$ as group identity and the duality autofunctor as inversion. We define the \textbf{power} of a line $L\in\Line$ as the set of all tensor powers
\begin{equation}
    L^\odot:=\bigcup_{n\in \Int} L^n.
\end{equation}
This set has than an obvious dimensioned set structure with dimension set $\Int$:
\begin{equation}
    \pi:L^\odot\to \Int.
\end{equation}
Since dimension slices are precisely the tensor powers $L^n$, they carry a natural $\Field$-vector space structure, thus making the power of $L$ into a dimensional abelian group $(L^\odot_\Int,+_\Int)$. The next proposition shows that the ordinary $\Field$-tensor product of vector spaces endows $L^\odot$ with a dimensioned field structure.

\begin{prop}[Dimensioned Ring Structure of the Power of a Line] \label{DimRingPower}
Let $L\in\Line$ be a line and $(L^\odot_\Int,+_\Int)$ its power, then the $\Field$-tensor product of elements induces a dimensioned multiplication
\begin{equation}
    \odot: L^\odot \times L^\odot\to L^\odot
\end{equation}
such that $(L^\odot_\Int,+_\Int,\odot)$ becomes a dimensioned field.
\end{prop}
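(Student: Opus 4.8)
The plan is to exhibit the multiplication $\odot$ explicitly, verify it is a dimensioned binary operation over the additive monoid $(\Int,+)$, check the dimensioned ring axioms, and finally show every nonzero element is invertible. First I would define, for homogeneous elements $x \in L^n$ and $y \in L^m$, the product $x \odot y \in L^{n+m}$ to be the image of $x \otimes y$ under the canonical isomorphism $L^n \otimes_\Field L^m \xrightarrow{\sim} L^{n+m}$ coming from associativity and commutativity of the $\Field$-tensor product together with the bookkeeping identities $(L^n)^* = L^{-n}$ and $L^n \otimes L^m = L^{n+m}$ recorded just before the statement (and the evaluation pairing $L^* \otimes L \to \Field$ in the case where $n$ and $m$ have opposite signs). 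On inhomogeneous elements $\odot$ is not needed: every element of $L^\odot$ lives in a single slice, so total-definedness is automatic. By construction $\pi(x \odot y) = \pi(x) + \pi(y)$, so $\odot$ is a dimensioned binar with dimension monoid $(\Int, +)$, and $1 \in \Field = L^0$ is a two-sided multiplicative identity.

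Next I would check that $(L^\odot, \odot)$ is a (commutative) monoid: associativity and commutativity follow slice-by-slice from the corresponding coherence isomorphisms of the symmetric monoidal category $(\Line, \otimes, \Field)$ — this is precisely the statement that the composite isomorphisms $L^n \otimes L^m \otimes L^k \to L^{n+m+k}$ built in the two possible orders agree, and likewise for the braiding, which is a standard Mac Lane coherence argument that I would not spell out. Then distributivity: given $x, x' \in L^n$ (same slice, so $x +_n x'$ is defined) and $y \in L^m$, the identity $(x +_n x') \odot y = x \odot y +_{n+m} x' \odot y$ holds because the isomorphism $L^n \otimes_\Field L^m \to L^{n+m}$ is $\Field$-linear and $\otimes_\Field$ is additive in each argument; the other side is symmetric. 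This is exactly the situation (case iii) of Section \ref{dimsets} for which the dimensioned ring axioms were designed, so the verification is routine.

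For the dimensioned field property I would argue as follows. A nonzero element of $L^\odot$ is a nonzero vector $x \in L^n$ for some $n \in \Int$; since $\dim_\Field L^n = 1$, such an $x$ is a basis of $L^n$, hence determines an isomorphism $\Field \xrightarrow{\sim} L^n$, equivalently a nonzero element of $(L^n)^* = L^{-n}$, namely the dual basis element $x^* \in L^{-n}$. One checks $x \odot x^* \in L^0 = \Field$ equals $1$ (after normalising the identification so that the evaluation pairing sends $x^* \otimes x \mapsto 1$), so $1/x := x^*$ is the reciprocal. Uniqueness is automatic from associativity as in any monoid. Finally, consistency with $\delta$: the dimension monoid $(\Int, +)$ is a group, as required by the remark preceding Proposition \ref{UnitsDimFields}, so the field condition is compatible with the projection $\pi$.

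The main obstacle is not any single hard step but the bookkeeping needed to make the case analysis on signs of $n$ and $m$ uniform — i.e. writing down one clean isomorphism $L^n \otimes_\Field L^m \to L^{n+m}$ that specialises correctly whether both exponents are positive, both negative, or of mixed sign, and checking it is associative and symmetric across all these cases simultaneously. The cleanest route is to avoid the case split entirely: fix once and for all a nonzero $\ell \in L$ and the dual $\ell^* \in L^*$ with $\langle \ell^*, \ell\rangle = 1$, use $\ell^{\otimes n}$ (resp. $(\ell^*)^{\otimes |n|}$) as a distinguished basis of each $L^n$, and define $\odot$ by declaring these distinguished bases to multiply as $\ell^{\otimes n} \odot \ell^{\otimes m} = \ell^{\otimes (n+m)}$; then all axioms reduce to arithmetic in $(\Int, +)$ and $\Field$. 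One then remarks that the resulting structure is independent of the choice of $\ell$ because a different choice $\ell' = \lambda \ell$ rescales the bases consistently and the induced map is a dimensioned ring isomorphism — indeed this is the content of Proposition \ref{UnitsDimFields} applied to $L^\odot$, foreshadowing that choices of unit for $L^\odot$ are exactly nonzero scalings of a basis vector of $L$.
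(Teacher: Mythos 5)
Your proposal is correct, and its primary route is essentially the paper's: define $\odot$ on homogeneous elements through the $\Field$-tensor product together with the evaluation pairing $L^*\otimes L\to\Field$ for mixed signs, get distributivity from $\Field$-bilinearity of $\otimes$, and produce the reciprocal of a nonzero $h\in L^n$ as the unique $\eta\in L^{-n}$ with $\eta(h)=1$. The one substantive difference is how commutativity and the sign bookkeeping are handled: the paper does not invoke Mac Lane coherence but instead proves by a direct element computation that the tensor product of a line with itself is \emph{canonically} commutative (via $\text{End}(L)\cong L^*\otimes L\cong\Field$ and $\alpha(a)\beta(b)=\alpha(b)\beta(a)$), and then writes the mixed-sign product as an explicit contraction formula; this keeps everything canonical at the cost of a somewhat awkward case split. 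Your alternative of fixing a nonzero $\ell\in L$ and multiplying distinguished bases by $\ell^{\otimes n}\odot\ell^{\otimes m}=\ell^{\otimes(n+m)}$ is a genuinely different presentation that trivialises all verifications, but note that the independence-of-choice check you defer is slightly stronger than you state: for $\ell'=\lambda\ell$ the two multiplications literally coincide (the factors $\lambda^{-n}\lambda^{-m}\lambda^{n+m}$ cancel), so one gets equality of structures rather than merely an isomorphism, which is what makes the basis-dependent definition legitimate as a definition. Either way the argument closes, and your identification of $(\Int,+)$ being a group as the consistency condition for the field axiom matches the paper's remark.
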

\begin{proof}
The construction of the dimensioned ring multiplication $\odot$ is done simply via the ordinary tensor product of ordinary vectors and taking advantage of the particular properties of 1-dimensional vector spaces. The two main facts that follow from the 1-dimensional nature of lines are: firstly, that linear endomorphisms are simply multiplications by field elements
\begin{equation}
    \text{End}(L)\cong L^*\otimes L\cong \Field
\end{equation}
which, at the level of elements, means that
\begin{equation}
    \text{End}(L)\ni\alpha \otimes a =\alpha(a)\cdot \Id_{L}
\end{equation}
as it can be easily shown by choosing a basis; and secondly, that the tensor product becomes canonically commutative, since, using the isomorphism above, we can directly check
\begin{equation}
    a\otimes b(\alpha,\beta)=\alpha(a)\beta(b)=\alpha(b)\beta(a)=b\otimes a(\alpha,\beta),
\end{equation}
thus showing
\begin{equation}
    a\otimes b=b\otimes a \in L\otimes L=L^2.
\end{equation}
The binary operation $\odot$ is then explicitly defined for elements $a,b\in L=L^1$, $\alpha,\beta\in L^*=L^{-1}$ and $r,s\in \Field=L^0$ by
\begin{align}
    a\odot b &:= a\otimes b\\
    \alpha\odot \beta &:= \beta \otimes \alpha\\
    r\odot s &:= r\otimes s=rs\\
    r\odot a &:= ra\\
    r\odot \alpha &:= r\alpha\\
    \alpha \odot a &:=\alpha(a) = a(\alpha) =: a \odot \alpha
\end{align}
Products of two positive power tensors $a_1\otimes \cdots \otimes a_q$, $b_1\otimes \cdots \otimes b_p$ and negative powers $\alpha_1\otimes \cdots \otimes \alpha_q$, $\beta_1\otimes \cdots \otimes \beta_p$ are defined by
\begin{align}
    (a_1\otimes \cdots \otimes a_q)\odot(b_1\otimes \cdots \otimes b_p) &:= a_1\otimes \cdots \otimes a_q\otimes b_1\otimes \cdots \otimes b_p\\
    (\alpha_1\otimes \cdots \otimes \alpha_q)\odot (\beta_1\otimes \cdots \otimes \beta_p) &:= \alpha_1\otimes \cdots \otimes \alpha_n \otimes \beta_1\otimes \cdots \otimes \beta_m
\end{align}
and extending by $\Field$-linearity. Let $q,p>0$, the dimensioned ring product satisfies:
\begin{equation}
    \odot: L^q \times L^p \to L^{q+p}, \qquad \odot: L^{-q} \times L^{-p} \to L^{-q-p}, \qquad \odot: L^0 \times L^0 \to L^0.
\end{equation}
For products combining positive power tensors $a_1\otimes \cdots \otimes a_q$ and negative power tensors $\alpha_1\otimes \cdots \otimes \alpha_p$ we critically make use of the isomorphism $L^*\otimes L\cong \Field$ to define without loss of generality:
\begin{equation}
    (a_1\otimes \cdots \otimes a_q) \odot (\alpha_1\otimes \cdots \otimes \alpha_p) := \alpha_1( a_1) \cdots \alpha_q( a_q) \alpha_{p-q}\otimes \cdots \otimes \alpha_p.
\end{equation}
It is then clear that the multiplication $\odot$ satisfies, for all $m,n\in \Int$,
\begin{equation}
    \odot: L^m \times L^n \to L^{m+n}
\end{equation}
and so it is compatible with the dimensioned structure of $L^\odot_\Int$. The multiplication $\odot$ is clearly associative and bilinear with respect to addition on each dimension slice from the fact that the ordinary tensor product is associative and $\Field$-bilinear. Then it follows that $(L^\odot_\Int,+_\Int,\odot)$ is a commutative dimensioned ring. It only remains to show that non-zero elements of $L^\odot$ have multiplicative inverses. Note that a non-zero element corresponds to some non-vanishing tensor $0\neq h\in L^n$, but, since $L^n$ is a 1-dimensional vector space for all $n\in \Int$, we can find a unique $\eta\in (L^n)^*=L^{-n}$ such that $\eta(h)=1$. It follows from the above formula for products of positive and negative tensor powers that, in terms of the dimensioned ring multiplication, this becomes
\begin{equation}
    h\odot \eta =1,
\end{equation}
thus showing that all non-zero elements have multiplicative inverses, making the dimensioned ring $(L^\odot_\Int,+_\Int,\odot)$ into a dimensioned field.
\end{proof}

The construction of the power dimensioned field of a line is, in fact, functorial.

\begin{thm}[The Power Functor for Lines] \label{PowerIsAFunctorLine}
The assignment of the power construction to a line is a functor
\begin{equation}
\normalfont
    \odot: \Line \to \textsf{DimRing}.
\end{equation}
Furthermore, a choice of unit in a line $L\in \Line$ induces a choice of units in the dimensioned field $(L^\odot_\Int,+_\Int,\odot)$ which, since $L^0=\Field$, then gives an isomorphism with the product dimensioned field
\begin{equation}
    L^\odot \cong \Field \times \Int.
\end{equation}
\end{thm}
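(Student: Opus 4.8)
The plan is to verify functoriality first and then handle the statement about units. For functoriality, I would take a morphism of lines $b:L\to L'$ (i.e.\ a non-zero linear map) and construct an induced dimensioned ring morphism $b^\odot:L^\odot\to (L')^\odot$. The natural candidate sends a positive tensor power $a_1\otimes\cdots\otimes a_q\in L^q$ to $b(a_1)\otimes\cdots\otimes b(a_q)\in (L')^q$, extended $\Field$-linearly; on negative powers it should use the inverse transpose, sending $\alpha\in L^*$ to $\alpha\circ b^{-1}\in (L')^*$ (which makes sense precisely because $b$ is invertible); and on $L^0=\Field$ it is the identity. The underlying map of dimension sets is $\Id_\Int$, so $b^\odot$ is automatically dimension-preserving. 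I would then check: (i) $b^\odot$ commutes with $\odot$ on each pair of slices --- this reduces, via the explicit formulas in Proposition \ref{DimRingPower}, to the fact that $b$ respects $\otimes$ and that $(\alpha\circ b^{-1})(b(a))=\alpha(a)$, so the pairing $L^*\otimes L\cong\Field$ used to define mixed products is preserved; (ii) $b^\odot$ preserves the multiplicative unit $1\in\Field=L^0$; and (iii) $(b'\circ b)^\odot = (b')^\odot\circ b^\odot$ and $(\Id_L)^\odot=\Id_{L^\odot}$, both immediate from the definitions. Hence $\odot$ is a functor.

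For the second claim, I would observe that a choice of unit in the line $L$ is a non-zero vector $\lambda\in L$. This determines a basis $\lambda^{\otimes n}$ of each slice $L^n$ for $n>0$; the dual basis element $\lambda^{-1}:=\lambda^*\in L^*$ with $\lambda^*(\lambda)=1$ gives a basis $(\lambda^*)^{\otimes(-n)}$ of $L^n$ for $n<0$; and $1\in\Field=L^0$ for $n=0$. Setting $u_n:=\lambda^{\otimes n}$ (with the evident reading for $n\le 0$) defines a map $u:\Int\to L^\odot$, and I would check $\delta\circ u=\Id_\Int$, that $u_n\neq 0_n$ for all $n$, and --- using the product formulas of Proposition \ref{DimRingPower}, in particular $\lambda^*(\lambda)=1$ so that $u_{-1}\odot u_1 = \lambda^*(\lambda)=1=u_0$ --- that $u_{m+n}=u_m\odot u_n$. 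Thus $u$ is a choice of units in the sense defined before Proposition \ref{UnitsDimFields}.

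Finally, since $L^\odot_\Int$ is a dimensioned field by Proposition \ref{DimRingPower}, Proposition \ref{UnitsDimFields} applies directly: the choice of units $u$ yields an isomorphism $L^\odot\cong (L^\odot)_1\times\Int$ of dimensioned fields. It remains only to identify the dimensionless ring $(L^\odot)_1$ with $\Field$, which is immediate because the dimensionless slice is $L^0=\Field$ by construction of the power, with its multiplication $r\odot s = rs$ being ordinary field multiplication. Substituting gives $L^\odot\cong\Field\times\Int$, as claimed.

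I expect the only mildly delicate point to be the compatibility of $b^\odot$ with mixed (positive-times-negative) products in step (i): the definition of $\odot$ on such pairs in Proposition \ref{DimRingPower} involves contracting via the canonical isomorphism $L^*\otimes L\cong\Field$, and one must confirm that pushing forward positive factors by $b$ and negative factors by $(b^{-1})^*$ leaves all the resulting scalar contractions $\alpha_i(a_i)$ unchanged --- this is exactly the identity $(\alpha\circ b^{-1})(b(a))=\alpha(a)$, but bookkeeping the index ranges in the formula $(a_1\otimes\cdots\otimes a_q)\odot(\alpha_1\otimes\cdots\otimes\alpha_p)$ for $q\neq p$ requires a little care. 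Everything else is a routine unwinding of definitions together with a citation of Propositions \ref{DimRingPower} and \ref{UnitsDimFields}.
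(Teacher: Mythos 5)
Your proposal is correct and follows essentially the same route as the paper's proof: the induced map $b^\odot$ is defined slice-wise by tensor powers of $b$ on positive slices, the identity on $L^0$, and tensor powers of the inverse transpose $(b^{-1})^*$ on negative slices, with compatibility for mixed products reducing to $(\alpha\circ b^{-1})(b(a))=\alpha(a)$, exactly as in the paper; and the unit section $n\mapsto u^n$ built from a non-zero $\lambda\in L$ (your dual-basis element $\lambda^*$ coincides with the multiplicative reciprocal used in the paper) feeds into Proposition \ref{UnitsDimFields} together with $(L^\odot)_1=L^0=\Field$ in the same way. No substantive differences.
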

\begin{proof}
To show functoriality we need to define the power of a factor of lines $B:L_1\to L_2$
\begin{equation}
    B^\odot:L_1^\odot \to L_2^\odot.
\end{equation}
This can be done explicitly in the obvious way, for $q>0$
\begin{align}
    B^\odot|_{L^q} &:= B\otimes \stackrel{q}{\cdots} \otimes B :L_1^q\to L_2^q\\
    B^\odot|_{L^0} &:= \Id_{\Field}:L_1^0\to L_2^0\\
    B^\odot|_{L^{-q}} &:= (B^{-1})^*\otimes \stackrel{q}{\cdots} \otimes (B^{-1})^* :L_1^{-q}\to L_2^{-q}
\end{align}
where we have crucially used the invertibility of the factor $B$. By construction, $B^\odot$ is compatible with the $\Int$-dimensioned structure and since $B$ is a linear map with linear inverse, all the tensor powers act as $\Field$-linear maps on the dimension slices, thus making $B^\odot:L_1^\odot \to L_2^\odot$ into a morphism of abelian dimensioned groups. Showing that $B^\odot$ is a dimensioned ring morphism follows easily by the explicit construction of the dimensioned ring multiplication $\odot$ given in proposition \ref{DimRingPower} above. This is checked directly for products that do not mix positive and negative tensor powers and for mixed products it suffices to note that
\begin{equation}
    B^\odot (\alpha)\odot B^\odot (a)=(B^{-1})^*(\alpha) \odot B(a)= \alpha (B^{-1}(B(a)))= \alpha(a) = \Id_{\Field}(\alpha(a)) =B^\odot (\alpha\odot a).
\end{equation}
It follows from the usual properties of tensor products in vector spaces that for another factor $C:L_2\to L_3$ we have
\begin{equation}
    (C\circ B)^\odot=C^\odot \circ B^\odot, \qquad (\Id_L)^\odot=\Id_{L^\odot},
\end{equation}
thus making the power assignment into a functor. Recall that a choice of unit in a line $L\in \Line$ is simply a choice of non-vanishing element $u\in L^\times$. In proposition \ref{DimRingPower} we saw that $L^\odot$ is a dimensioned field, so multiplicative inverses exist, let us denote them by $u^{-1}\in (L^*)^\times$. Using the notation for $q>0$
\begin{align}
    u^q &:= u\odot \stackrel{q}{\cdots} \odot u\\
    u^0 &:= 1\\
    u^{-q} &:= u^{-1}\odot \stackrel{q}{\cdots} \odot u^{-1},
\end{align}
it is clear that the map
\begin{align}
u: \Int & \to L^\odot\\
n & \mapsto u^n
\end{align}
satisfies
\begin{equation}
    u^{n+m}=u^n\odot u^m.
\end{equation}
By construction, all $u^n\in L^n$ are non-zero, so $u:\Int \to L^\odot$ is a choice of units in the dimensioned field $(L^\odot_\Int,+_\Int,\odot)$. The isomorphism of dimensioned fields $L^\odot \cong \Field \times \Int$ follows from proposition \ref{UnitsDimFields} and the observation that, by definition, $(L^\odot)_0=L^0=\Field$.
\end{proof}

The power ring construction of a line can be easily extended to a collection of lines: given the ordered set of lines $L_1,\dots,L_k\in\Line$, we define their \textbf{power ring} as:
\begin{equation}
    (L_1,\dots,L_k)^\odot:=\bigcup_{n_1,\dots n_k\in \Int} L_1^{n_1}\otimes \cdots \otimes L_k^{n_k},
\end{equation}
which has a natural abelian dimensioned group structure given by $\Field$-linear addition and has dimension group $\Int^k$. The dimensioned multiplicative structure generalizes in the obvious way:
\begin{equation}
    (a_1\otimes \dots \otimes a_k) \odot (b_1\otimes \dots \otimes b_k):= a_1\odot b_1 \otimes \dots \otimes a_k\odot b_k
\end{equation}
thus making $((L_1,\dots,L_k)^\odot_{\Int^k},+_{\Int^k},\odot)$ into a dimensioned field. Note that the powers of each individual line $L_i$ can be found as dimensioned subfields $L_i^\odot\subset (L_1,\dots,L_k)^\odot$ since they are simply the dimensional preimages of the natural subgroups $\Int\subset \Int^k$. Furthermore, a choice of unit in each of the individual lines $u_i\in L_i^\times$ naturally induces a choice of units for the power ring in a natural way
\begin{align}
U: \Int^k & \to (L_1,\dots,L_k)^\odot\\
(n_1,\dots,n_k) & \mapsto u_1^{n_1}\odot \cdots \odot u_k^{n_k},
\end{align}
which, in turn, gives the dimensioned ring isomorphism:
\begin{equation}
    (L_1,\dots,L_k)^\odot \cong \Field \times \Int^k.
\end{equation}

We thus recover the explicit algebraic structure of physical quantities identified in Section \ref{physicalquantity} from the standard practice in dimensional analysis. This concludes our initial goal to develop a rigorous mathematical framework where the formal structure of physical quantities is naturally accounted for.

\section{Dimensioned Modules} \label{dimmod}

Ordinary modules are algebraic structures closely related to rings: there is an internal additive structure and multiplication is defined externally in such a way that properties, as formally close as possible to the ring axioms, are satisfied. We motivate the definition of dimensioned modules by investigating the structure present in natural constructions with dimensioned rings.\newline

Considering the dimensional abelian group part of a dimensioned ring $(R_D,+_D,\cdot^D)$, we can form the product $R_D\times R_D$, which is a dimensional abelian group with dimension set $D\times D$, or the direct sum $R_D \oplus_D R_D$, which is a dimensional abelian group with dimension set $D$. In both cases we can form module-like maps by setting
\begin{equation}
    a_d * (b_e,c_f):=(a_d\cdot b_e,a_d\cdot c_f), \qquad a_d *(b_e\oplus c_e):= a_d\cdot b_e \oplus a_d\cdot c_e.
\end{equation}
These module-like actions are compatible with the dimensioned structure in the sense that, in the first case, $D$ acts diagonally on $D\times D$ and, in the second case, $D$ acts on itself by multiplication. Furthermore, from the defining axioms of dimensioned ring, these maps satisfy the usual linearity properties of the conventional notion of $R$-module with the only caveat that addition is partially defined.\newline

Recall from our discussion in Section \ref{dimring} that the dimensioned maps from $R_D$ into itself form an abelian dimensioned group $(\Dim{R_D}_{\text{Map}(D)},+_{\text{Map}(D)})$ where $\text{Map}(D)$ denotes the set of maps from $D$ onto itself. The presence of the dimensioned ring multiplication allows for the definition of the following module-like structure
\begin{equation}
    * :R_D\times \Dim{R_D} \to \Dim{R_D}
\end{equation}
defined via
\begin{equation}
    (a_d * \Phi)(b_e):=a_d\cdot \Phi(b_e).
\end{equation}
We note that $a_d * \Phi$ is a well-defined dimensioned morphism from the fact $D$ acts naturally on $\text{Map}(D)$ by composition with the monoid multiplication action of $D$ on itself: indeed the if $\phi:D\to D$ is the dimension map of $\Phi$, then $a_d * \Phi$ has dimension map $d\circ \phi:D\to D$. Once more, it follows directly from the axioms of dimensioned ring that that this operation satisfies the usual linearity properties of the conventional notion of $R$-module with the only caveat that addition is partially defined.\newline

These examples motivate the following definition: let $(R_G,+_G,\,^G)$ be a dimensioned ring (in  the  interest  of  notational  economy,  ring  multiplications  will be  denoted by  juxtaposition  hereafter) and $(A_D,+_D)$ a dimensioned abelian group. Note that $G$ carries a monoid structure whereas $D$ is simply a set. $A_D$ is called a \textbf{dimensioned $R_G$-module} if there is a map 
\begin{equation}
    \cdot :R_G\times A_D \to A_D
\end{equation}
that is compatible with the dimensioned structures via a monoid action $G\times D\to D$ (denoted by juxtaposition) in the following sense
\begin{equation}
    r_g\cdot a_d=(r \cdot a)_{gd}
\end{equation}
and that satisfies the following axioms
\begin{itemize}
    \item[1)] $r_g\cdot (a_d+b_d)=r_g\cdot a_d + r_g\cdot b_d$,
    \item[2)] $(r_g+p_g)\cdot a_d=r_g\cdot a_d + p_g \cdot a_d$,
    \item[3)] $(r_g p_h)\cdot a_d=r_g\cdot (p_h\cdot a_d)$,
    \item[4)] $1\cdot a_d=a_d$
\end{itemize}
for all $r_g,p_h\in R_G$ and $a_d,b_d\in A_D$. Note that these four axioms for a map $\cdot :R_G\times A_D \to A_D$ can only be demanded in consistency with the dimensioned structure in the presence of a monoid action $G\times D\to D$. With this definition at hand, we recover the motivating examples: the direct sum $R_G \oplus_G R_G$ is a dimensioned $R_G$-module with dimension set $G$ and monoid action given by the multiplication action; the product $R_G\times R_G$ is a dimensioned $R_G$-module with dimension set $G\times G$ and monoid action given by the diagonal action; and the set of dimensioned maps of a dimensioned ring $\Dim{R_G}$ is a dimensioned $R_G$-module with dimension set $\text{Map}(G)$ and monoid action given by composition with the multiplication action.\newline

Let $(A_D,+_D)$ and $(B_E,+_E)$ be two dimensioned $R_G$-modules, a morphism of abelian dimensioned groups $\Phi:A_D\to B_E$ is called \textbf{$R_G$-linear} if
\begin{equation}
    \Phi(r_g\cdot a_d)=r_g\cdot \Phi(a_d)
\end{equation}
for all $r_g\in R_G$ and $a_d\in A_D$. Note that this condition forces the dimension map $\phi:D\to E$ to satisfy
\begin{equation}
    \phi(gd)=g\phi(d)
\end{equation}
for all $g\in G$ and $d\in D$, in other words, the dimension map $\phi$ must be $G$-equivariant with respect to the monoid actions of the dimension sets $D$ and $E$. Let us denote the set of $G$-equivariant dimension maps as
\begin{equation}
    \text{Map}^G(D,E):=\{\phi:D\to E \, | \quad \phi\circ g = g \circ \phi \quad \forall \, g\in G\},
\end{equation}
then it follows that the dimensioned group of morphisms $\Dim{A_D,B_E}_{\text{Map}(D,E)}$ contains a dimensioned subgroup of morphisms covering $G$-equivariant dimension maps for which the following dimensioned module map can be defined
\begin{equation}
    (r_g\cdot \Phi)(a_d):=r_g\cdot \Phi(a_d)=\Phi(r_g\cdot a_d).
\end{equation}
The set of dimensioned maps $\Dim{A_D,B_E}_{\text{Map}^G(D,E)}\subset \Dim{A_D,B_E}_{\text{Map}(D,E)}$ that are $R_G$-linear is thus shown to carry a natural dimensioned $R_G$-module structure. We simply call these the \textbf{$R_G$-linear maps} between $A_D$ and $B_E$ and denote them by $\text{Dim}_{R_G}(A_D,B_E)$. Dimensioned $R_G$-modules together with $R_G$-linear maps form a category denoted by $\normalfont R_G\textsf{DimMod}$.\newline

Let $(A_D,+_D)$ be a dimensioned $R_G$-module, a dimensional abelian subgroup $S\subset A_D$ is called a \textbf{dimensioned submodule} if
\begin{equation}
    r_g\cdot s_d\in S\cap A_{gd}
\end{equation}
for all $r_g\in R_G$ and $s_d\in S$. Natural examples of dimensioned submodules are the \textbf{span} of a subset $X\subset A_D$, defined as all the possible $R_G$-linear combinations of elements in $X$, and the kernels and images of $R_G$-linear maps between modules. The dimensional abelian group quotient construction of Section \ref{dimring} induces the notion of \textbf{quotient} of $R_G$-modules: let $\delta: A\to D$ be the dimension projection of the dimensioned $R_G$-module $(A_D,+_D)$ and $S\subset A_D$ a submodule, then by taking the quotient as dimensional abelian groups $A'_{\delta(S)}:= A_D/S$ is a dimensioned $R_G$-module.\newline

Let $(A_D,+_D)$ and $(B_D,+_D)$ be two dimensioned $R_G$-modules, the dimensional abelian group direct sum $A_D\oplus_D B_D$ carries a natural $R_G$-module structure:
\begin{equation}
    r_g\cdot (a_d\oplus_d b_d) := r_g\cdot a_d \oplus_{gd} r_g\cdot b_d,
\end{equation}
which gives the definition of \textbf{direct sum} of dimensioned $R_G$-modules. Our definitions so far allow for notions from ordinary module theory, such as \textbf{finitely generated}, \textbf{free}, \textbf{projective} or \textbf{injective}, to apply to dimensioned modules in an obvious way. By fixing a dimensioned ring $R_G$ and a dimension set $D$, $R_G$-modules with dimensions in $D$ together with $D$-preserving $R_G$-linear maps form an abelian category, essentially analogous to the category of $D$-dimensional abelian groups $\textsf{DimAb}_D$. This is called the \textbf{category of $D$-dimensional $R_G$-modules} denoted by $R_G\textsf{DimMod}_D$.\newline

Let $(A_D,+_D)$ and $(B_E,+_E)$ be two dimensioned $R_G$-modules, we define their \textbf{tensor product} from the product of the underlying dimensional abelian groups and quotienting by the dimensioned ring action:
\begin{equation}
    A_D\otimes_{R_G} B_E := (A_D \otimes B_E)/\sim_{R_G}
\end{equation}
where $\sim_{R_G}$ is defined in the obvious way:
\begin{equation}\label{tensorquorel}
    (r_g\cdot a_d,b_e)\sim_{R_G} (a_d,r_g\cdot b_e)
\end{equation}
for all $r_g\in R_G$, $a_d\in A_D$ and $b_e\in B_E$. By construction, $A_D\otimes_{R_G} B_E$ is a dimensional abelian group with dimension set given by a peculiar quotient of the product of dimension sets:
\begin{equation}
    D \times^G E := (D \times E)/\sim_G
\end{equation}
where $\sim_G$ is induced from $\sim_{R_G}$ above in the obvious way:
\begin{equation}
    (g d,e)\sim_G (d,g e),
\end{equation}
for all $g\in G$, $d\in D$, $e\in E$, making use of the monoid actions. These constructions ensure that $A_D\otimes_{R_G} B_E$ is indeed a $R_G$-module whose action can be made explicit by using the element-wise tensor product notation as in the case of tensor product of ordinary modules:
\begin{equation}
    (r_g\cdot a_d) \otimes b_e = r_g \cdot (a_d \otimes b_e) = a_d \otimes (r_g \cdot b_e).
\end{equation}
It follows that the tensor product construction can be characterised with the obvious universal property, defining in turn the tensor product of $R_G$-linear maps and establishing \textbf{$R_G$-bilinearity} of a map $\Phi: A_D\times B_E\to C_F$ as the fact that it factors through the tensor product via $R_G$-linear maps:
\begin{equation}
\begin{tikzcd}
A_D\times B_E \arrow[r, "\otimes"] & A_D\otimes_{R_G} B_E \arrow[r, "\phi"] & C_F
\end{tikzcd}
\end{equation}
where $\otimes$ denotes the natural element-wise tensor product as a dimensioned map covering the canonical projection $\eta: D\times E\to D \times^G E$.\newline

In complete analogy with the tensor product of ordinary modules, it is easy to check that the category of dimensioned $R_G$-modules $R_G \textsf{DimMod}$ forms a symmetric monoidal category with the above tensor product and $R_G$ as the unit object. The novelty of our definition lies in the fact that the dimension sets also inherit a monoidal structure. Consider the category of sets as objects and $G$-equivariant maps as morphisms\footnote{Technically, a morphism is a triple specifying $G$-actions on a pair of sets and an equivariant map between them.}; we denote this category by $\Set^G$. It is easy to see that the product $\times^G$ defined above as the dimension component of the tensor product of dimensioned modules is a well-defined tensor product in its own right making $(\Set^G,\times^G)$ into a monoidal category with $G$ as the unit object ($G$ acts on itself by monoid multiplication). Identifying this category allows to recast our definition of the category of $R_G$-modules in a more evocative fashion: the definitions of $R_G$-linearity and the dimension structures of all the sets involved imply that there is a full surjective functor the form:
\begin{equation}
\begin{tikzcd}
R_G\textsf{DimMod} \arrow[d, "\textsf{d}"]\\
\Set^G
\end{tikzcd}
\end{equation}
This invites us to think of $R_G\textsf{DimMod}$ as a category with dimensions on the category $\Set^G$.\newline

The theory of dimensioned modules developed thus far is manifestly analogous to the theory of ordinary modules; all the standard notions of ordinary modules appear identical aside from the \emph{dimensioned technology} present to systematically account for the partial nature of addition. Proposition \ref{DimCat} below further vindicates this statement by showing that dimensioned modules form the dimensioned analogue of a \textbf{rig category} \cite{baez2010rig} \cite{johnson2021bimonoidal}, which, as is the case for ordinary modules over a ring, amounts to the presence of two monoidal structures interacting precisely as the binary operations of a rig (a ring without additive inverses).

\begin{prop}[Dimensioned Rig Category] \label{DimCat}
The direct sum $\oplus$ on $\normalfont R_G\textsf{DimMod}$ is a dimensional binar with dimensions in $\normalfont \Set^G$ and the tensor product $\otimes$ on $\normalfont R_G\textsf{DimMod}$ is a dimensioned binar compatible with the monoid product of $\normalfont \Set^G$. Furthermore, $\otimes$ is distributive with respect to $\oplus$ within the category $\normalfont R_G\textsf{DimMod}$ thus making
\begin{equation}
\normalfont
    (R_G\textsf{DimMod}_{\Set^G}, \oplus_{\Set^G},\otimes^{\Set^G})
\end{equation}
into a \textbf{dimensioned rig category}.
\end{prop}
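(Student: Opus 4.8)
The plan is to read all the structure off the fibered picture built above, in which $R_G\textsf{DimMod}$ sits over the symmetric monoidal category $(\Set^G,\times^G,G)$ via the full surjective functor $\textsf{d}$, and then to check that the dimensioned analogues of Laplaza's coherence axioms for a rig (bimonoidal) category \cite{baez2010rig} \cite{johnson2021bimonoidal} hold --- on each dimension slice these reduce to the ordinary coherence of $R_G$-modules, while on dimension sets they reduce to the coherence of $\Set^G$. First I would record the dimensional binar structure of $\oplus$: by the definition of the direct sum of dimensioned $R_G$-modules in Section \ref{dimmod}, $A_D\oplus_D B_D$ is defined precisely when the two summands carry the same object $D\in\Set^G$ as dimension set, and then $\textsf{d}(A_D\oplus_D B_D)=D$, with $\oplus_D$ acting on $D$-preserving $R_G$-linear maps; hence $\oplus$ is exactly the collection $\{(R_G\textsf{DimMod}_D,\oplus_D)\}_{D\in\Set^G}$ of binars indexed by the objects of $\Set^G$, which is the datum of a dimensional binar with dimensions in $\Set^G$, and each slice $(R_G\textsf{DimMod}_D,\oplus_D)$ is a biproduct, hence a symmetric monoidal structure with unit $0_D$.

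Next I would record the dimensioned binar structure of $\otimes$. The tensor product $A_D\otimes_{R_G}B_E$ is totally defined, and by construction $\textsf{d}(A_D\otimes_{R_G}B_E)=D\times^G E$ with $\textsf{d}(\Phi\otimes\Psi)=\textsf{d}(\Phi)\times^G\textsf{d}(\Psi)$; since moreover $\textsf{d}(R_G)=G$ is the unit of $\times^G$, the functor $\textsf{d}$ is strong symmetric monoidal from $(R_G\textsf{DimMod},\otimes,R_G)$ onto $(\Set^G,\times^G,G)$. This is precisely the statement that $\otimes$ is a dimensioned binar compatible with the monoid product of $\Set^G$, and it pins down the associator, unitors and braiding of $\otimes$ so that they cover the corresponding isomorphisms of $\Set^G$; existence of these constraints was already noted when $R_G\textsf{DimMod}$ was observed to be symmetric monoidal under $\otimes$.

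Then I would construct the distributivity data. For fixed $D\in\Set^G$, objects $A_{D'}$ of arbitrary dimension and $B_D,C_D\in R_G\textsf{DimMod}_D$, the left distributor $A_{D'}\otimes_{R_G}(B_D\oplus_D C_D)\xrightarrow{\sim}(A_{D'}\otimes_{R_G}B_D)\oplus_{D'\times^G D}(A_{D'}\otimes_{R_G}C_D)$, its right-handed analogue, and the left and right annihilators $0_{D'}\otimes_{R_G}B_D\cong 0_{D'\times^G D}$ are all assembled from the ordinary module distributivity and absorption isomorphisms on the underlying pieces; one checks that each is a dimensioned map covering the identity of $D'\times^G D$ --- hence a dimension-preserving isomorphism inside $R_G\textsf{DimMod}$ --- and natural in all variables. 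Finally I would verify the coherence diagrams: those involving only $\otimes$ follow from the symmetric monoidal structure of $(R_G\textsf{DimMod},\otimes)$, those involving only a single $\oplus_D$ from the biproduct structure of $R_G\textsf{DimMod}_D$, and the mixed ones --- whenever the dimensions of the objects occurring are compatible enough for the diagram to be defined --- commute because on each dimension slice they are the corresponding diagrams for ordinary $R_G$-modules, while on dimension sets they follow from the symmetric monoidal coherence of $\Set^G$ together with the fact that $\oplus$ fixes dimensions.

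The main obstacle is organizational rather than substantive. One must first fix a precise definition of a \emph{dimensioned rig category} --- a category equipped with a dimension functor to a monoidal category, carrying a dimensional binar and a compatible dimensioned binar subject to the dimensioned analogues of Laplaza's axioms --- and then, in every coherence diagram, keep careful track of which dimension fiber each object lives in, since $\oplus$ is only partially defined. The one genuinely new feature relative to an ordinary rig category is that the left and right distributors exist only when the relevant summands already share a dimension; once that is accounted for, every diagram reduces, through the slice-wise description of $\oplus$ and $\otimes$ and the monoidal structure on $\Set^G$, to an identity already known for ordinary modules, so I expect no real surprises.
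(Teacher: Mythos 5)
Your proposal is correct and follows essentially the same route as the paper: both read the dimensional binar structure of $\oplus$ off the slice categories $R_G\textsf{DimMod}_D$, identify $\otimes$ as a dimensioned binar via $A_D\otimes_{R_G}B_E=(A\otimes_{R_G}B)_{D\times^G E}$ covering $\times^G$ on $\Set^G$, and derive distributivity and absorption from the underlying module constructions. You are somewhat more careful than the paper in spelling out the coherence (Laplaza-type) diagrams and the slice-wise additive units $0_D$, but this is elaboration rather than a different argument.
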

\begin{proof}
Recall that a dimensional binar structure is simply an ordinary binar structure restricted to the preimages of the dimension projection, which, in the case of the $\textsf{d}$ functor above, are simply the categories of $D$-dimensional $R_G$-modules $R_G\textsf{DimMod}_D$ for some fixed set $D\in \Set^G$. The categories $R_G\textsf{DimMod}_D$ are easily shown to be abelian categories with respect to the direct sum restricted to the fixed set $\oplus_D$. The tensor product of $R_G$-modules acts as a dimensioned operation on $R_G\textsf{DimMod}$ by construction, as indeed one has:
\begin{equation}
    A_D \otimes_{R_G} B_E = (A \otimes_{R_G} B)_{D \times^G E}.
\end{equation}
The additive identity is the trivial module $0_1$ whose underlying dimensioned set is the singleton surjection ${0}\to 1$, it then follows from the quotient relation (\ref{tensorquorel}) that there are natural isomorphisms:
\begin{equation}
    A_D\otimes_{R_G} 0_1 \cong 0_1 \qquad 0_1 \otimes_{R_G} A_D \cong 0_1
\end{equation}
thus giving the usual absorption identities of a rig. Similarly, it follows from the defining slice-wise relations of dimensional free abelian group construction that the tensor product is right distributive with respect to the direct sum:
\begin{equation}
    (A_D \oplus_D B_D )\otimes_{R_G} C_E \cong A_D \otimes_{R_G} C_E \oplus_{D\times^G E} B_D \otimes_{R_G} C_E
\end{equation}
and similarly for left distributivity.
\end{proof}

So far we have only considered dimensioned modules over the same dimensioned ring. We can connect categories of dimensioned modules over different dimensioned rings via the \textbf{pullback} construction: let $(A_D,+_D)$ be a $R_G$-module and $\varphi: P_H\to R_G$ a dimensioned ring morphism, then $A_D$ has a dimensioned $P_H$-module structure given by:
\begin{equation}
    p_h\cdot a_d := \varphi(p_h)\cdot a_d
\end{equation}
for all $p_h\in P_H$ and $a_d\in A_D$. This dimensioned module is denoted by $\varphi^*A_D$ since the base set of the module is unchanged and so is its dimension set; the monoid action $H\times D \to D$ is given by pullback with the dimension map $H\to G$ of the dimensioned ring morphism $\varphi: P_H\to R_G$. This construction motivates the extension of the notion of dimensioned module morphisms to account for maps between modules over different rings: let $A_D$ be a $R_G$-module and $B_E$ a $P_H$-module, the pair of maps $\Phi^\varphi$ is said to be a \textbf{twisted module morphism} if $\Phi_F: A_D\to B_E$ is a dimensioned map, $\varphi_f: R_G\to P_H$ is a dimensioned ring morphism and
\begin{equation}
    \Phi(r_g\cdot a_d)=\varphi(r_g)\cdot \Phi(a_d)
\end{equation}
for all $r_g\in R_G$ and $a_d\in A_D$. We also say that $\Phi: A_D\to B_E$ is a \textbf{$\varphi$-linear map}. Note that the $\varphi$-linearity condition implies that the dimension map $F:D\to E$ satisfies a twisted equivariance property with respect to the monoid actions:
\begin{equation}
    F(gd)=f(g)F(d)
\end{equation}
where $g\in G$, $d\in D$. For two twisted module morphisms $\Phi^\phi$ and $\Psi^\psi$ it is easily checked that:
\begin{equation}
    \Phi^\phi \circ \Psi^\psi = (\Phi \circ \Psi)^{\phi \circ \psi}
\end{equation}
and so the categories of dimensioned modules over a fixed dimensioned ring $R_G\textsf{DimMod}$ can now be generalised to include all module morphisms twisted by the endomorphisms of $R_G$.

\begin{prop}[Pullback Functor] \label{PullBackFunct}
Let $\varphi: P_H\to R_G$ be dimensioned ring morphism, then the pullback construction is compatible with composition and categorical products. Furthermore, when $\varphi$ is an isomorphism, the assignment
\begin{equation}
\normalfont
    \varphi^* : R_G\textsf{DimMod} \to P_H\textsf{DimMod}.
\end{equation}
becomes a functor of dimensioned rig categories, the categorical analogue of a dimensioned ring morphism.
\end{prop}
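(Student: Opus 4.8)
The plan is to exploit that the pullback $\varphi^{*}$ alters essentially nothing except the action: for a dimensioned $R_G$-module $(A_D,+_D)$ the object $\varphi^{*}A_D$ has the same underlying dimensioned abelian group and the same dimension set $D$, only the $P_H$-action $p_h\cdot a_d:=\varphi(p_h)\cdot a_d$ and the monoid action $H\times D\to D$ obtained by composing the original $G$-action with the dimension map $f:H\to G$ of $\varphi$. Likewise, the $\varphi$-linearity identity exhibits an $R_G$-linear map, verbatim, as a $P_H$-linear map between the pullbacks, its dimension map $\phi$ being $G$-equivariant and hence $H$-equivariant via $f$. So every assertion will reduce to bookkeeping of how the two new actions compose with $\varphi$ and $f$, and the only genuinely substantive point will be the compatibility of $\varphi^{*}$ with the tensor product, which is precisely what forces the isomorphism hypothesis.

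For compatibility with composition I would take a second dimensioned ring morphism $\psi:Q_K\to P_H$ and observe that $(\varphi\circ\psi)^{*}A_D$ and $\psi^{*}(\varphi^{*}A_D)$ carry the same underlying dimensioned abelian group $A_D$ and the single $K$-action $q_k\cdot a_d=\varphi(\psi(q_k))\cdot a_d$; their monoid actions on $D$ are both the composite $K\to H\to G$ acting through the original $G$-action; and on morphisms all of $\varphi^{*}$, $\psi^{*}$, $(\varphi\circ\psi)^{*}$ are the identity on underlying dimensioned maps. Hence $(\varphi\circ\psi)^{*}=\psi^{*}\circ\varphi^{*}$ and $(\Id_{R_G})^{*}=\Id$. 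For compatibility with categorical products, the product $A_D\times B_E$ in $R_G\textsf{DimMod}$ has underlying dimensioned abelian group $A\times B$, dimension set $D\times E$ and the diagonal $R_G$-action; pulling back along $\varphi$ replaces this by the diagonal $P_H$-action $p_h\cdot(a_d,b_e)=(\varphi(p_h)\cdot a_d,\varphi(p_h)\cdot b_e)$, which is exactly the diagonal action defining $\varphi^{*}A_D\times\varphi^{*}B_E$, while the dimension sets and their $H$-actions agree identically. So $\varphi^{*}(A_D\times B_E)\cong\varphi^{*}A_D\times\varphi^{*}B_E$ canonically, and more generally $\varphi^{*}$ commutes with the surjective functor $\textsf{d}$ up to the pullback-of-$G$-sets functor $\Set^{G}\to\Set^{H}$ induced by $f$.

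Now assume $\varphi$ is an isomorphism, so that $f:H\to G$ is a monoid isomorphism and $\varphi^{-1}:R_G\to P_H$ is again a dimensioned ring morphism. That $\varphi^{*}$ is a functor covering the equivalence $\Set^{G}\simeq\Set^{H}$ induced by $f$ is immediate from the previous paragraph. For the additive half of the rig structure, $\varphi^{*}$ sends each $R_G\textsf{DimMod}_D$ into $P_H\textsf{DimMod}_D$ and intertwines $\oplus_D$ with $\oplus_D$ by the diagonal-action argument used for products, the exactness and abelian-category structure being inherited from the untouched underlying dimensioned abelian groups. For the multiplicative half I would produce three natural isomorphisms. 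The unit: $\varphi^{*}R_G$ is $R_G$ with $P_H$-action $p_h\cdot r=\varphi(p_h)\,r$, and $\varphi^{-1}$ is additive, dimensioned and satisfies $\varphi^{-1}(p_h\cdot r)=\varphi^{-1}(\varphi(p_h))\,\varphi^{-1}(r)=p_h\cdot\varphi^{-1}(r)$, hence is a $P_H$-module isomorphism $\varphi^{*}R_G\cong P_H$. The tensor product: $\varphi^{*}(A_D\otimes_{R_G}B_E)\cong\varphi^{*}A_D\otimes_{P_H}\varphi^{*}B_E$, since the defining relation $\sim_{R_G}$, ranging over all $r_g\in R_G$, coincides with the pulled-back relation ranging over $\varphi(p_h)$ with $p_h\in P_H$ precisely because $\varphi$ is surjective, and at the level of dimension sets the orbit equivalences defining $D\times^{G}E$ and $D\times^{H}E$ agree because $f$ is bijective. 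Finally, coherence and distributivity: the associator, unitors, symmetry, the absorption isomorphisms $A_D\otimes_{R_G}0_1\cong 0_1$ and the distributivity isomorphisms are all built from the underlying $\Field$-vector-space and abelian-group operations, which $\varphi^{*}$ leaves untouched, so they transport verbatim and the pentagon, hexagon and distributivity coherences hold automatically.

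The main obstacle, and the sole reason the isomorphism hypothesis is imposed, is the tensor-product clause: one must check that passing to the pullback neither kills nor creates quotient relations. Surjectivity of $\varphi$ is exactly what identifies $\{(\varphi(p_h)\cdot a,b)\sim(a,\varphi(p_h)\cdot b)\}$ with $\{(r_g\cdot a,b)\sim(a,r_g\cdot b)\}$, so that $\varphi^{*}A_D\otimes_{P_H}\varphi^{*}B_E$ and $\varphi^{*}(A_D\otimes_{R_G}B_E)$ have the same underlying quotient, and injectivity -- equivalently the existence of the ring morphism $\varphi^{-1}$ -- is what lets one identify the pulled-back unit $\varphi^{*}R_G$ with $P_H$. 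For a non-invertible $\varphi$ one is left only with a lax comparison, the canonical surjection $\varphi^{*}A_D\otimes_{P_H}\varphi^{*}B_E\twoheadrightarrow\varphi^{*}(A_D\otimes_{R_G}B_E)$ together with the unit map $P_H\to\varphi^{*}R_G$, rather than an equivalence of dimensioned rig categories. Everything else is routine bookkeeping with the module and monoid actions composed with $\varphi$ and $f$, with no further subtlety.
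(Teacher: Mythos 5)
Your argument diverges from the paper's at the one point that actually drives the isomorphism hypothesis. Just before this proposition the category $R_G\textsf{DimMod}$ is enlarged to include all module morphisms twisted by endomorphisms of $R_G$, i.e.\ pairs $\Psi^\psi$ with $\Psi(r_g\cdot a_d)=\psi(r_g)\cdot\Psi(a_d)$ for a dimensioned ring endomorphism $\psi$ of $R_G$. You only pull back the untwisted ($\psi=\Id$) morphisms, for which restriction of scalars is verbatim $P_H$-linear, and so you never meet the paper's obstruction: the natural pullback of a twisted morphism is $\Psi^{\psi\circ\varphi}:\varphi^*A_D\to B_E$, a map from a $P_H$-module to an $R_G$-module, hence not a morphism of $P_H\textsf{DimMod}$ at all. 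The paper's stated reason for assuming $\varphi$ invertible is precisely that one can then conjugate the twist and set $\varphi^*(\Psi^\psi):=\Psi^{\varphi^{-1}\circ\psi\circ\varphi}$, which does land in $P_H\textsf{DimMod}$ and respects composition of twisted morphisms via $\Phi^\phi\circ\Psi^\psi=(\Phi\circ\Psi)^{\phi\circ\psi}$. Relatedly, you read ``compatible with composition'' as the contravariant functoriality $(\varphi\circ\psi)^*=\psi^*\circ\varphi^*$ in the ring morphism, whereas the paper means preservation of composition of (twisted) module morphisms under a fixed $\varphi^*$; your statement is true, but it is not the one proved there, and without the twisted morphisms your functor is only defined on a subcategory of the one the paper intends.

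That said, the part of your argument devoted to the monoidal structure is more careful than the paper's. The paper asserts $\varphi^*(A_D\otimes_{R_G}B_E)\cong\varphi^*A_D\otimes_{P_H}\varphi^*B_E$ for an arbitrary dimensioned ring morphism, remarking only that additivity and multiplicativity of $\varphi$ are ``crucial''; as you observe, the defining relations $\sim_{R_G}$ and $\sim_{P_H}$ (and the corresponding orbit relations on $D\times E$) coincide only when $\varphi$ and its dimension map are surjective, and in general one gets only a canonical comparison surjection --- exactly as for restriction of scalars in ordinary module theory. Your location of where surjectivity and injectivity enter is therefore a genuine refinement of the written proof; graft onto it the treatment of twisted morphisms above and the two arguments agree.
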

\begin{proof}
Note that the pullback assignment $\varphi^*$ is the identity functor at the level of dimensional abelian groups by construction. This means that we should only check compatibility with the pullback module multiplication structure. Compatibility with a $\psi$-linear map of $R_G$-modules $\Psi^\psi:A_D\to B_E$ is ensured by setting:
\begin{equation}
    \varphi^*(\Psi^\psi):= \Psi^{\psi \circ \varphi}
\end{equation}
which, by simple checks is shown to satisfy:
\begin{equation}
    \varphi^*(\Id_{A_D}^{\Id_{R_G}})=\Id_{\varphi^*A_D}^{\Id_{P_H}} \qquad \varphi^*(\Phi^\phi \circ \Psi^\psi)=\varphi^*\Phi^\phi \circ \varphi^*\Psi^\psi.
\end{equation}
Further relying on the underlying dimensional abelian group structure of dimensioned modules, we can easily show that pullbacks preserve categorical products in the following sense:
\begin{equation}
    \varphi^*(A_D\oplus_D B_D)\cong \varphi^*A_D \oplus_D \varphi^*B_D \qquad \varphi^*(A_D\otimes_{R_G} B_E)\cong \varphi^*A_D \otimes_{P_H} \varphi^*B_E 
\end{equation}
where the isomorphisms are as dimensioned modules. Note that additivity and multiplicativity of the dimensioned ring morphism $\varphi$ is crucial for the construction of these isomorphisms. The pullback assignment defined in this way sends objects to objects between the categories of dimensioned modules over the fixed rings $R_G$ and $P_H$, but it does not so for morphisms of those categories since $\varphi^*(\Psi^\psi):\varphi^*A_D\to B_E$ is a map from a $P_H$-module to a $R_G$-module. When $\varphi$ is an isomorphism, however, we can define the pullback assignment making use of the inverse:
\begin{equation}
    \varphi^*(\Psi^\psi):= \Psi^{\varphi^{-1}\circ \psi \circ \varphi}
\end{equation}
which then makes it into a well defined functor.
\end{proof}

An important situation where twisted module morphisms appear naturally is the general notion of quotient of modules induced by ideal submodules.

\begin{prop}[Quotient Dimensioned Module] \label{QuotMod}
Let $(A_D,+_D)$ be a dimensioned $R_G$-module and $I\subset R_G$ an ideal, then, if $S\subset A_D$ is a submodule such that $I\cdot A\subset S$, the quotient $A_D/S$ inherits a dimensioned $R_G/I$-module structure such that the projection map
\begin{equation}
    Q: A_D\to A_D/S
\end{equation}
is a $q$-linear map, where $q:R_G\to R_G/I$ is the quotient ring projection.
\end{prop}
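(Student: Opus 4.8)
The plan is to mimic the proof of the Quotient Dimensioned Ring (Proposition \ref{QuotDimRing}) and the analogous classical statement, carefully tracking how the dimension sets and monoid actions transform. First I would record the data: by Proposition \ref{QuotDimRing} the quotient $R_G/I$ is a dimensioned ring with dimension monoid $\delta_I(I) \subset G$ (a submonoid), and $q: R_G \to R_G/I$ is the quotient ring morphism with underlying monoid morphism $\bar q$ on dimension sets. On the module side, $A_D/S$ is already a dimensioned abelian group with dimension set $\delta_S(S) \subset D$, where $\delta_S : S \to D$ is the restriction of the dimension projection of $A_D$; this follows from the dimensional abelian group quotient construction of Section \ref{dimring}. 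The monoid action $G \times D \to D$ must descend to an action of $\bar q(\delta_I(I))$ — equivalently of $R_G/I$'s dimension monoid — on $\delta_S(S)$; I would check this descends because $S$ being a submodule means $r_g \cdot s_d \in S \cap A_{gd}$, so $g \cdot \delta_S(s) \in \delta_S(S)$ whenever $g$ lies in the image submonoid, and the action is well-defined on the quotient set because translating $d$ within a coset of $S$-slices is carried to translating $gd$ within a coset.

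Next I would define the module multiplication on the quotient explicitly by
\begin{equation}
    (r_g + I_g) \cdot (a_d + S_d) := r_g \cdot a_d + S_{gd},
\end{equation}
writing $I_g := I \cap R_g$ and $S_d := S \cap A_d$, and verify it is well-defined: the ambiguity on the left factor lies in $I_g$ and on the right in $S_d$, and the hypothesis $I \cdot A \subset S$ together with $r_g \cdot S_d \subset S \cap A_{gd}$ (submodule property) shows both $I_g \cdot a_d$ and $r_g \cdot S_d$ land in $S_{gd}$, so the product is independent of representatives — exactly the same cancellation used in Proposition \ref{QuotDimRing}. Then I would check axioms 1)–4) of a dimensioned module: each follows by applying $Q$ to the corresponding identity in $A_D$ and using that $Q$ and $q$ are additive/multiplicative on the nose, the compatibility $r_g \cdot a_d = (r\cdot a)_{gd}$ descending to $\bar q(r_g) \cdot Q(a_d) = Q(r\cdot a)_{\bar q(g)\, \delta_S(s)}$ — here I must be slightly careful that the dimension set of $R_G/I$ is $\delta_I(I)$ and the action is via the restriction of $G$'s action, which is consistent precisely because $\delta_I(I)$ is a submonoid.

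Finally I would observe that $Q : A_D \to A_D/S$ together with $q : R_G \to R_G/I$ forms a twisted module morphism in the sense defined just above: $Q$ is a dimensioned map (covering the quotient of dimension sets), $q$ is a dimensioned ring morphism, and $Q(r_g \cdot a_d) = Q(r\cdot a)_{gd} = (r\cdot a)_{gd} + S_{gd} = (r_g + I_g)\cdot(a_d + S_d) = q(r_g) \cdot Q(a_d)$, which is precisely $q$-linearity. I expect the main obstacle to be purely bookkeeping rather than conceptual: namely verifying that the monoid action of $G$ on $D$ genuinely descends to an action of the quotient dimension monoid $\delta_I(I)$ on the quotient dimension set $\delta_S(S)$, since one must confirm both that the relevant subset of $D$ is preserved and that the action respects the identification of slices killed by $S$ — everything else is a direct transcription of the classical quotient-module argument with the partial addition handled by the $S_d$-coset formalism already established for dimensional abelian groups.
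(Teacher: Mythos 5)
Your proposal is correct and follows essentially the same route as the paper: the heart of both arguments is the expansion of $(r_g+i_g)\cdot(a_d+s_d)$, with the cross terms absorbed into $S$ via the submodule property and the hypothesis $I\cdot A\subset S$, after which $q$-linearity of $Q$ holds by construction. The extra bookkeeping you flag about the monoid action descending to the quotient dimension sets is sound and is simply left implicit in the paper's proof.
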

\begin{proof}
Since quotients of rings and modules are taken as dimensional abelian groups, this result follows from a simple computation showing the distributivity property of the ideal submodule $S$: let $r_g\in R_G$, $i_g\in I$, $a_d\in A_D$ and $s_d\in S$ then
\begin{equation}
    (r_g+i_g)\cdot (a_d+s_d) = r_g \cdot a_d + r_g \cdot s_d + i_g \cdot a_d + i_g \cdot s_d.
\end{equation}
The second and fourth terms are in $S$ from the fact that $S$ is a submodule and the third term is in $S$ from the ideal submodule condition $I\cdot A\subset S$, then the above expression defines the $R_G/I$-module structure on $A_D/S$ which, by construction, satisfies $Q(r_g\cdot a_d)=q(r_g)\cdot Q(a_d)$.
\end{proof}

\section{Dimensioned Algebras} \label{dimalg}

Let us motivate the dimensioned generalization of the notion of algebra by considering, once more, the guiding example of the set of dimensioned maps of a dimensioned ring $R_G$. Proposition \ref{DimEndRing} implies that the set of dimensioned maps of $R_G$, regarded as a dimensional abelian group, carries a dimensioned ring structure $(\Dim{R_G}_{\text{Map}(G)},+_{\text{Map}(G)},\circ^{\text{Map}(G)})$; but $\Dim{R_G}_{\text{Map}(G)}$ is also naturally a $R_G$-module structure. Depending on which further conditions are imposed on the dimensioned maps, these two structures, the dimensioned ring and the $R_G$-module, may interact in different ways. A first obvious choice is to consider dimensioned ring homomorphisms, in which case the interaction manifests as the fact that composition acts as a twisted $R_G$-module morphism. Another direction is to consider differential operators. Although we will only focus on zeroth order operators and derivations, general differential operators are defined recursively from the $R_G$-linearity condition:
\begin{equation}
    \Phi_\phi(r_gs_h)=r_g\Phi_\phi(s_h)=\Phi_\phi(r_g)s_h
\end{equation}
for $r_g,s_h\in R_G$ and $\Phi_\phi:R_G \to R_G$ a dimensioned map. In the case at hand of commutative dimensioned rings, this condition can only be realised by multiplication by a ring element, which corresponds to the natural inclusion $R_G\hookrightarrow \text{Dim}_{R_G}(R_G)$ via ring multiplication. Dimension maps of such $R_G$-linear operators correspond, in turn, to multiplication by monoid elements. It is then easy to see that $R_G$-linear operators satisfy
\begin{equation}
    r_g\cdot (\Phi_\phi \circ \Psi_\psi)= (r_g\cdot \Phi_\phi) \circ \Psi_\psi=\Phi_\phi \circ (r_g\cdot \Psi_\psi)
\end{equation}
for all $r_g\in R_G$ and $\Phi_\phi, \Psi_\psi\in \text{Dim}_{R_G}(R_G)$ . This shows that $\text{Dim}_{R_G}(R_G)$ gives a prime example of a bilinear associative operation on a dimensioned module and prompts us to give the following general definition.\newline

Let $(A_D,+_D)$ be a dimensioned $R_G$-module, a map $M:A_D\times A_D\to A_D$ is called a \textbf{dimensioned bilinear multiplication} if it satisfies
\begin{align}
    M(a_d \, +_d\, b_d,c_e)&=M(a_d,c_e) \,+_{\mu(d,e)} \, M(b_d,c_e)\\
    M(a_d,b_e \, +_e \, c_e)&=M(a_d,b_e) \, +_{\mu(d,e)} \, M(a_d,c_e)\\
    M(r_g\cdot a_d,s_h\cdot b_e)&=r_g\cdot s_h\cdot M(a_d, b_e)
\end{align}
for all $a_d,b_d,b_e,c_e\in A_D$, $r_g,s_h\in R_G$ and for a \textbf{dimension map} $\mu:D\times D\to D$ which is $G$-equivariant in both entries, i.e.
\begin{equation}
    \mu(gd,he)=gh\mu(d,e)
\end{equation}
for all $g,h\in G$ and $d,e\in D$. When such a map $M$ is present in a dimensioned $R_G$-module $A_D$, the pair $(A_D,M)$ is called a \textbf{dimensioned $R_G$-algebra}. Note that the tensor product of dimensioned modules given a the end of Section \ref{dimmod} allows to reformulate the definition of a dimensioned bilinear multiplication $M:A_D\times A_D\to A_D$ as a dimensioned $R_G$-linear morphism
\begin{equation}
    M:A_D\otimes_{R_G} A_D \to A_D.
\end{equation}

The natural notions of morphisms and subalgebras of ordinary algebras extend naturally to the dimensioned case. Let $(A_D,M)$ and $(B_E,N)$ be two dimensioned $R_G$-algebras, a $R_G$-linear morphism $\Phi:A_D\to B_E$ is called a \textbf{morphism of dimensioned algebras} if
\begin{equation}
    \Phi(M(a,a'))=N(\Phi(a),\Phi(a')),
\end{equation}
for all $a,a'\in A_D$. A submodule $S\subset A_D$ such that $M(S,S)\subset S$ is called a \textbf{dimensioned subalgebra}.\newline

The dimension map $\mu$ of a dimensioned bilinear multiplication in a dimensioned $R_G$-algebra $(A_D,M_\mu)$ is naturally regarded as an binary operation on the set of dimensions $D$, i.e. a binar. These are, generally, featureless algebraic structures, however, if one wishes to demand specific algebraic properties, such as commutativity or associativity, the algebraic structure present in the dimension binar becomes necessarily richer. Let $(A_D,M_\mu)$ be a dimensioned $R_G$-algebra, we say that it is \textbf{symmetric} or \textbf{antisymmetric} if
\begin{equation}
    M(a_d,b_e)=M(b_e,a_d), \qquad M(a_d,b_e)=-M(b_e,a_d)
\end{equation}
for all $a_d,b_e\in A_D$, respectively. The dimension binars of symmetric or antisymmetric dimensioned algebras are necessarily commutative, i.e. $\mu(d,e)=\mu(e,d)$ for all $d,e\in D$. The usual 3-element-product properties of ordinary algebras can be demanded for dimensioned algebras in an analogous way, in particular $(A_D,M_\mu)$ is called \textbf{associative} or \textbf{Jacobi} if
\begin{equation}
    \text{Assoc}_M(a_d,b_e,c_f)=0,\qquad \text{Jac}_M(a_d,b_e,c_f)=0
\end{equation}
for all $a_d,b_e,c_f\in A_D$, respectively. The dimension binars of associative or Jacobi dimensioned algebras are necessarily associative, i.e. $\mu(\mu(d,e),f)=\mu(d,\mu(e,f))$ for all $d,e,f\in D$, making them into semigroups. Returning to the motivating example presented at the beginning of this section, we now see that the dimensioned morphisms of a dimensioned ring $R_G$ give the prime example of dimensioned associative algebra $(\Dim{R_G},\circ)$.\newline

In parallel with the definitions of ordinary algebras, we define \textbf{dimensioned commutative algebra} as a symmetric and associative dimensioned algebra and a \textbf{dimensioned Lie algebra} as an antisymmetric and Jacobi dimensioned algebra. Note that dimensioned commutative and dimensioned Lie algebras necessarily carry dimension sets that are commutative semigroups.\newline

In keeping with the general philosophy to continue to scrutinize the natural algebraic structure present in the dimensioned module of dimensioned maps of a dimensioned ring $R_G$, let us attempt to find the appropriate dimensioned generalization of the notion of derivations of a ring. Working by analogy, a dimensioned derivation will be a dimensioned map $\Delta\in \Dim{R_G}$ covering a dimension map $\delta:G\to G$ satisfying a Leibniz identity with respect to the dimensioned ring multiplication
\begin{equation}
    \Delta(r_g\cdot s_h)= \Delta(r_g)\cdot s_h + r_g\cdot\Delta(s_h),
\end{equation}
for all $r_g,s_h\in R_G$, however, for the right-hand-side to be well-defined, both terms must be of homogeneous dimension, which means that the dimension map must satisfy
\begin{equation}
    \delta(gh)=\delta(g)h=g\delta(h)
\end{equation}
for all $g,h\in G$. Since $G$ is a monoid, this condition is equivalent to the dimension map being given by multiplication with a monoid element, i.e. $\delta=d$ for some element $d\in G$. Following from this observation, we see that there is a natural dimensioned submodule of the dimensioned module of dimensioned maps $\Dim{R_G}_G\subset \Dim{R_G}_{\text{Map(G)}}$ given by the dimensioned maps whose dimension maps are specified by multiplication with a monoid element. Recall that dimensioned rings are assumed to be commutative and, thus, the dimension monoid has commutative binary operation. This allows for the identification of the first natural example of dimensioned Lie algebra: the commutator of the associative dimensioned composition
\begin{equation}\label{dimcomm}
    [\Delta,\Delta']:=\Delta \circ \Delta'-\Delta' \circ \Delta,
\end{equation}
is easily checked to be antisymmetric and Jacobi, thus making $(\Dim{R_G}_G,[,])$ into the \textbf{dimensioned Lie algebra of dimensioned maps} of a dimensioned ring $R_G$.\newline

Notice that commutator bracket (\ref{dimcomm}) can only be defined on the dimensioned submodule $\Dim{R_G}_G\subset \Dim{R_G}_{\text{Map(G)}}$ since the two terms of the right-hand-side will have different dimensions in general since the composition of maps from $G$ into itself is a non-commutative binary operation. This motivates the definition of the dimensioned Lie algebra of \textbf{derivations of a dimensioned ring} $R_G$ as the natural dimensioned Lie subalgebra of the dimensioned maps:
\begin{equation}
    \Dr{R_G}\subset (\Dim{R_G}_G,[,]).
\end{equation}
Derivations covering the identity dimension map $\Id_G:G\to G$ are called \textbf{dimensionless derivations} and it is clear by definition that they form an ordinary Lie algebra with the commutator bracket $(\Dr{R_G}_1, [,])$. By restricting their action to elements of the dimensionless ring $R_1\subset R_G$ we recover the ordinary Lie algebra of ring derivations, in other words, there is a surjective map of Lie algebras
\begin{equation}
    (\Dr{R_G}_1,[,])\to (\Dr{R_1},[,]).
\end{equation}

The example of the dimensioned Lie algebra of derivations illustrates the case of a dimensioned algebra whose set of dimensions is a (commutative) monoid and whose dimension map is simply given by the monoid multiplication. For the remainder of this work, the dimension sets of dimensioned modules will be assumed to carry a commutative monoid structure (with multiplication denoted by juxtaposition of elements) unless stated otherwise. Let $A_G$ be a dimensioned module, a dimensioned algebra multiplication $M:A_G\times A_G\to A_G$ is said to be \textbf{homogeneous of dimension $m$} if the dimension map $\mu:G\times G\to G$ is given by monoid multiplication with the element $m\in G$, i.e. $\mu(g,h)=mgh$ for all $g,h\in G$. Assuming a monoid structure on the dimension set of a dimensioned module and considering dimensioned algebra multiplications of homogeneous dimension is particularly useful in order to study several algebra multiplications coexisting on the same set. Indeed, given two homogeneous dimensioned algebra multiplications $(A_G,M_1)$ and $(A_G,M_2)$ with dimensions $m_1\in G$ and $m_2\in G$, respectively, the fact that the monoid operation is assumed to be associative and commutative, allows for consistently demanding properties of the interaction of the two dimensioned multiplications involving expressions of the form $M_1(M_2(a,b),c)$ without any further requirements.

\section{Dimensioned Poisson Algebras} \label{dimPoisson}

We now put all the dimensioned algebra machinery to work aiming to develop the dimensioned analogue of Poisson algebras. Let $A_G$ be a dimensioned $R_H$-module and let two dimensioned algebra multiplications $*:A_G\times A_G\to A_G$ and $\{,\}:A_G\times A_G\to A_G$ with homogeneous dimensions $p\in G$ and $b\in G$, respectively, the triple $(A_G,*_p,\{,\}_b)$ is called a \textbf{dimensioned Poisson algebra} if
\begin{itemize}
    \item[1)] $(A_G,*_p)$ is a dimensioned commutative algebra,
    \item[2)] $(A_G,\{,\}_b)$ is a dimensioned Lie algebra,
    \item[3)] the two multiplications interact via the Leibniz identity
    \begin{equation}
        \{a,b*c\}=\{a,b\}*c+b*\{a,c\},
    \end{equation}
    for all $a,b,c\in A_G$.
\end{itemize}
Note that the Leibniz condition can be consistently demanded of the two dimensioned algebra multiplications since the dimension projections of each of the terms of the Leibniz identity for $\{a_g,b_h*c_k\}$ are: 
\begin{equation}
    bgphk,\quad pbghk,\quad phbgk,
\end{equation}
which are all indeed equal from the fact that the monoid binary operation is associative and commutative.\newline

A morphism of dimensioned modules between dimensioned Poisson algebras $\Phi:(A_G,*_p,\{,\}_b)\to (B_H,*_r,\{,\}_c)$ is called a \textbf{morphism of dimensioned Poisson algebras} if $\Phi:(A_G,*_p)\to (B_H,*_r)$ is a morphism of dimensioned commutative algebras and also $\Phi:(A_G,\{,\}_b)\to (B_H,\{,\}_c)$ is a morphism of dimensioned Lie algebras. A submodule $I\subset A_G$ that is a dimensioned ideal in $(A_G,*_p)$ and that is a dimensioned Lie subalgebra in $(A_G,\{,\}_b)$ is called a \textbf{dimensioned coisotrope}. The category of dimensioned Poisson algebras is denoted by $\textsf{DimPoissAlg}$.\newline

We conclude our discussion of dimensioned algebra by showing that the category of dimensioned Poisson algebras admits the all-important constructions of reductions and products, which turn out to be direct generalizations of their ordinary counterparts for Poisson algebras.

\begin{prop}[Dimensioned Poisson Reduction] \label{DimensionedPoissonReduction}
Let $(A_G,*_p,\{,\}_b)$ be a dimensioned Poisson algebra and $I\subset A_G$ a coisotrope, then there is a dimensioned Poisson algebra structure induced in the subquotient
\begin{equation}
    (A'_G:=N(I)/I,*'_p,\{,\}'_b)
\end{equation}
where $N(I)$ denotes the dimensioned Lie idealizer of $I$ regarded as a submodule of the dimensioned Lie algebra.
\end{prop}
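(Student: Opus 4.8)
The plan is to follow the classical recipe for Poisson reduction, transcribed into the dimensioned setting with all the dimension bookkeeping carried along. Write the dimensioned Lie idealizer explicitly as
\begin{equation}
    N(I) := \{a\in A_G \ | \ \{a,i\}\in I \ \text{ for all } \ i\in I\}.
\end{equation}
The claim to establish is that $N(I)$ is a \emph{dimensioned Poisson subalgebra} of $(A_G,*_p,\{,\}_b)$ which contains $I$ and in which $I$ sits as an ideal for both multiplications; the reduced structure on $A'_G=N(I)/I$ then arises by descending each operation to the quotient.

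First I would check that $N(I)$ is a dimensioned $R_H$-submodule of $A_G$: bilinearity of $\{,\}_b$ over partial addition shows that each slice $N(I)\cap A_g$ is a subgroup of $A_g$, and the identity $\{r\cdot a,i\}=r\cdot\{a,i\}$ (the third axiom of a dimensioned bilinear multiplication with the ring unit in the second slot), together with $I$ being a submodule, gives closure under the $R_H$-action. Next, closure under the two multiplications. For $a,a'\in N(I)$ and $i\in I$, the Jacobi identity for $\{,\}_b$ yields $\{\{a,a'\},i\}=\{a,\{a',i\}\}-\{a',\{a,i\}\}$; since $\{a',i\},\{a,i\}\in I$ and $a,a'\in N(I)$, both terms on the right lie in $I$, so $\{a,a'\}\in N(I)$. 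For $*_p$, using antisymmetry and the Leibniz identity, $\{i,a*a'\}=\{i,a\}*a'+a*\{i,a'\}$; here $\{i,a\}=-\{a,i\}\in I$ and $\{i,a'\}=-\{a',i\}\in I$, and since $I$ is a dimensioned associative ideal of $A_G$ both products lie in $I$, whence $a*a'\in N(I)$. One verifies in passing that the three dimension labels appearing in this last chain — the degree of $\{i,a*a'\}$ and those of the two summands — coincide, which holds because $p$ and $b$ are homogeneous dimensions and the dimension monoid is commutative and associative. Finally, $I\subseteq N(I)$ because $I$ is a dimensioned Lie subalgebra, i.e. $\{I,I\}\subseteq I$; and $I$ is an ideal in $N(I)$ for $*_p$ (it is already one in all of $A_G$) and for $\{,\}_b$ (that is exactly the defining property of $N(I)$).

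With these facts the quotient is routine. Since $I$ is a dimensioned submodule of $N(I)$, the quotient-module construction of Section \ref{dimmod} endows $A'_G:=N(I)/I$ with a dimensioned $R_H$-module structure, with dimension set $\delta(I)\subseteq G$; the ideal conditions just verified make $(a+I)*'(a'+I):=a*a'+I$ and $\{a+I,a'+I\}':=\{a,a'\}+I$ well defined, and homogeneity of the dimensions $p$ and $b$ is inherited because the ideal properties of $I$ force $p\cdot G\cdot\delta(I)\subseteq\delta(I)$ and $b\cdot\delta(I)\cdot\delta(I)\subseteq\delta(I)$. The symmetry and associativity of $*'_p$, the antisymmetry and Jacobi identity for $\{,\}'_b$, and the Leibniz identity relating them all descend from $N(I)$ by evaluating on representatives, so $(A'_G,*'_p,\{,\}'_b)$ is a dimensioned Poisson algebra.

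I expect the only real subtlety to be the dimensional consistency rather than the algebra: every manipulation above is verbatim the ordinary Poisson-reduction argument, so the work lies in confirming that each identity is already dimension-homogeneous before it is invoked — especially in the Leibniz-based proof that $N(I)$ is closed under $*_p$ — and that the induced dimension maps on $N(I)/I$ remain $H$-equivariant in both entries. Both follow mechanically from commutativity and associativity of the dimension monoid, exactly as in the consistency check performed for the Leibniz axiom in the definition of a dimensioned Poisson algebra.
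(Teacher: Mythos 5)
Your proposal is correct and follows essentially the same route as the paper's proof: identify $N(I)$ as the Lie idealizer, use the Jacobi identity for Lie-closure and the Leibniz identity (plus the $*$-ideal property of $I$) for $*$-closure, then descend both operations to $N(I)/I$ via the quotient constructions of Propositions \ref{QuotDimRing} and \ref{QuotMod}. The only difference is that you spell out the closure computations and the dimension-homogeneity checks that the paper asserts more tersely.
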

\begin{proof}
We assume without loss of generality that the dimension projection of $I$ is the whole of $G$, the intersections with the dimension slices are denoted by $I_g:=I\cap A_g$. The dimensioned Lie idealizer is defined in the obvious way:
\begin{equation}
    N(I):=\{n_g\in A_G|\quad \{n_g,i_h\}\in I_{bgh} \quad \forall i_h\in I\}.
\end{equation}
Note that $N(I)$ is the smallest dimensioned Lie subalgebra that contains $I$ as a dimensioned Lie ideal. The Leibniz identity implies that $N(I)$ is a dimensioned commutative subalgebra with respect to $*_p$ in which $I$ sits as a dimensioned commutative ideal, since it is a commutative ideal in the whole $A_G$. Following a construction analogous to the quotient dimensioned ring of Proposition \ref{QuotDimRing} we can form the dimensioned quotient commutative algebra $(N(I)/I,*')$ whose underlying module is the quotient dimensioned module of Proposition \ref{QuotMod}. Note that the only difference with the case of the quotient dimensioned ring construction is that the commutative multiplication $*_p$ covers a dimension map that is given by the monoid multiplication with a non-identity element $p\in G$, but this has no effect on the quotient construction itself. To obtain the desired quotient dimensioned Lie bracket we set:
\begin{equation}
    \{n_g+I_g,m_h+I_h\}':=\{n_g,m_h\}+I_{bgh}
\end{equation}
which is easily checked to be well-defined and that inherits the antisymmetry and Jacobi properties directly from dimensioned Lie bracket $\{,\}$ and the fact that $I\subset N(I)$ is a dimensioned Lie ideal.
\end{proof}

\begin{prop}[Heterogeneous Dimensioned Poisson Product] \label{DimensionedPoissonProductHetero}
Let $(A_G,*_g,\{,\}_g)$ and $(B_H,*_h,\{,\}_h)$ be dimensioned Poisson algebras over a dimensioned ring $R_P$, then the tensor product $A_G \otimes_{R_P} B_H$ carries a natural Poisson structure defined by linearly extending the following commutative product and Lie bracket:
\begin{align}
    (a\otimes b) *_{gh} (a' \otimes b) &:= a *_g a' \otimes b *_h b'\\
    \{a \otimes b, a' \otimes b'\}_{gh} &:= \{ a, a'\}_g \otimes b *_h b' + a*_g a'\otimes \{b,b'\}_h.
\end{align}
\end{prop}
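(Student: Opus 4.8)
The plan is to verify, on $A_G \otimes_{R_P} B_H$, the four requirements in the definition of a dimensioned Poisson algebra for the operations $*_{gh}$ and $\{,\}_{gh}$, after first checking that the stated element-wise formulas genuinely define dimensioned bilinear multiplications. For well-definedness I would start from the observation that the third axiom of a dimensioned bilinear multiplication, applied to each factor operation with one of the two ring arguments set to $1$, yields identities such as $(r_p \cdot a) *_g a' = r_p \cdot (a *_g a')$ and $\{r_p \cdot a, a'\}_g = r_p \cdot \{a, a'\}_g$; combined with the defining relation $r_p \cdot a \otimes b = a \otimes r_p \cdot b$ of $\otimes_{R_P}$ (the tensor product of dimensioned $R_P$-modules of Section~\ref{dimmod}), these show that the formulas for $*_{gh}$ and for each of the two summands of $\{,\}_{gh}$ descend from decomposable tensors to $A_G \otimes_{R_P} B_H$ and are $R_P$-bilinear. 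Reading off the dimension maps then shows both operations are homogeneous: since the monoid structure on $G \times^P H$ is the quotient of the componentwise product, an element $a_{g_1} \otimes b_{h_1}$ has dimension the class of $(g_1, h_1)$, and both $*_{gh}$ and $\{,\}_{gh}$ raise dimension by the class of $(g, h)$, which is exactly the subscript $gh$; $P$-equivariance in each entry is inherited from the factor operations.

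With well-definedness in place, the symmetric--associative half of the structure is immediate: evaluating on decomposable tensors, commutativity and associativity of $*_{gh}$ follow from the same properties of $*_g$ and $*_h$, and antisymmetry of $\{,\}_{gh}$ follows from antisymmetry of $\{,\}_g$ and $\{,\}_h$ together with symmetry of $*_g$ and $*_h$ used to flip the two summands. The one structural point worth recording is that these 2- and 3-argument identities are consistent with the dimensioned grading precisely because $G$ and $H$ are commutative monoids, hence so is $G \times^P H$; in particular the dimension binar of the resulting algebra is a commutative monoid, as the definitions of dimensioned commutative and dimensioned Lie algebra require.

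The substantive work is the Jacobi identity for $\{,\}_{gh}$ and the Leibniz identity between $\{,\}_{gh}$ and $*_{gh}$. For Jacobi, I would expand $\{x, \{y, z\}_{gh}\}_{gh}$ and its two cyclic permutations on decomposable tensors $x = a \otimes b$, $y = a' \otimes b'$, $z = a'' \otimes b''$, using $R_P$-bilinearity to distribute. The terms whose $A$-component is a triple bracket assemble, after using commutativity and associativity of $*_h$ on their $B$-components, into $\text{Jac}_{\{,\}_g}(a, a', a'') \otimes (b *_h b' *_h b'') = 0$, and symmetrically for the $B$-side; the remaining mixed terms, produced by applying the factorwise Leibniz identities to brackets of the form $\{b, b' *_h b''\}_h$, cancel in pairs. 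This is the classical computation for the product Poisson structure. The Leibniz identity is cleaner: expanding $\{x, y *_{gh} z\}_{gh}$, applying the Leibniz identities of $(A_G,*_g,\{,\}_g)$ and $(B_H,*_h,\{,\}_h)$, and regrouping by commutativity and associativity of $*_g$ and $*_h$, one matches it term by term with $\{x, y\}_{gh} *_{gh} z + y *_{gh} \{x, z\}_{gh}$. I expect the main obstacle to be the dimension bookkeeping inside the Jacobi computation rather than any conceptual difficulty: one must confirm that each of the roughly dozen terms in the expanded cyclic sum carries the same dimension, namely the class of $(g^2 g_1 g_2 g_3,\, h^2 h_1 h_2 h_3)$, where $g_1, g_2, g_3$ and $h_1, h_2, h_3$ are the dimensions of $a, a', a''$ and $b, b', b''$. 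This holds because the dimension monoids are commutative and associative, so the order in which the factor operations are applied is immaterial --- but it still has to be checked term by term, and it is the only place where the argument departs from the ungraded proof. Once all four axioms are established, $(A_G \otimes_{R_P} B_H, *_{gh}, \{,\}_{gh})$ is by definition a dimensioned Poisson algebra.
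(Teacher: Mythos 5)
Your proposal is correct and follows essentially the same route as the paper: both treat the algebraic identities (commutativity, associativity, Jacobi, Leibniz) as the classical product-Poisson-structure computation carried over verbatim, and both identify the genuinely new content as the dimension bookkeeping --- namely that $*_{gh}$ is homogeneous of dimension $g\otimes_P h$ on $G\times^P H$ and that the two summands of $\{,\}_{gh}$ have matching dimension projections because each factor algebra's bracket and product share the same homogeneous dimension. The paper simply cites the ungraded computation and records only the two dimension-projection calculations, whereas you propose to write out the descent to the tensor product and the cyclic-sum cancellation explicitly; this is additional detail, not a different argument.
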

\begin{proof}
From an algebraic point of view, this construction is entirely analogous to the ordinary product of Poisson algebras (see for instance \cite{fernandes2014lectures}), which is in fact explicitly recovered for trivial dimension monoids $G=H=\{1\}$. Then, it suffices to show that the two proposed expressions for are well-defined dimensioned algebras. First note that the tensor product of dimensioned monoids carries a natural monoid structure defined by:
\begin{equation}
    (g \otimes_P h ) \cdot (g' \otimes_P h') := gg' \otimes_P hh'.
\end{equation}
Taking the dimension projection of the commutative product we obtain:
\begin{equation}
    \delta(a_x *_g a_y' \otimes_{R_P} b_u *_h b_v') = gxy \otimes_P huv = ( g \otimes_P h ) \cdot (x \otimes_P u) \cdot (y \otimes_P v),
\end{equation}
hence $*_{gh}$ is a well-defined commutative multiplication of dimension $g \otimes_P h\in G \times^P H$. For the bracket $\{\,,\,\}_{gh}$ to be well-defined both terms need to have equal dimension projection, this is indeed the case since the dimension of the bracket and commutative products are assumed to be equal:
\begin{equation}
    \delta (\{ a_x, a_y'\}_g \otimes b_u *_h b_v') = (g \otimes_P h)\cdot (x \otimes_P u) \cdot (y \otimes_P v) = \delta (a_x*_g a_y'\otimes \{b_u,b_v'\}_h).
\end{equation}
\end{proof}

\begin{prop}[Homogeneous Dimensioned Poisson Product] \label{DimensionedPoissonProductHomo}
Let $(A_G,*_p,\{,\}_b)$ and $(B_G,*_q,\{,\}_c)$ be dimensioned Poisson algebras over a dimensioned ring $R_G$ such that $bq=pc$, then the tensor product $A_G \otimes_{R_G} B_G$ carries a natural Poisson structure defined by linearly extending the following commutative product and Lie bracket:
\begin{align}
    (a\otimes b) *_{pq} (a' \otimes b) &:= a *_p a' \otimes b *_q b'\\
    \{a \otimes b, a' \otimes b'\}_{pc} &:= \{ a, a'\}_b \otimes b *_q b' + a*_p a'\otimes \{b,b'\}_c.
\end{align}
\end{prop}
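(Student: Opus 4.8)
The plan is to run the argument of the heterogeneous case (Proposition~\ref{DimensionedPoissonProductHetero}) almost verbatim, the only new feature being that everything now lives over a single dimensioned ring $R_G$ whose dimension monoid $G$ also serves as the dimension monoid of both $A_G$ and $B_G$, acting on these dimension sets by multiplication. The first point I would record is that, under this action, $G\times^G G\cong G$ via $g\otimes_G h\mapsto gh$, the section being $m\mapsto 1\otimes_G m$, which is well defined since $(d,e)\sim_G(1,de)$; hence $A_G\otimes_{R_G}B_G$ is again a dimensioned $R_G$-module (Section~\ref{dimmod}) with dimension monoid $G$ and $\delta(a_x\otimes b_u)=xu$. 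All dimension arithmetic below is carried out under this identification.

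First I would check that the two displayed formulas define dimensioned bilinear multiplications. For the symmetric product, $\delta(a_x*_p a_y'\otimes b_u*_q b_v')=(pxy)(quv)=pq\,(xu)(yv)$, so $*_{pq}$ is homogeneous of dimension $pq$, and $R_G$-bilinearity together with compatibility with the tensor relations $\sim_{R_G}$ are inherited from $*_p$ and $*_q$ exactly as for the tensor product of ordinary algebras. For the bracket, the two summands $\{a_x,a_y'\}_b\otimes b_u*_q b_v'$ and $a_x*_p a_y'\otimes\{b_u,b_v'\}_c$ have dimension projections $(bxy)(quv)=bq\,(xu)(yv)$ and $(pxy)(cuv)=pc\,(xu)(yv)$; these agree precisely because of the hypothesis $bq=pc$, which is what makes the sum defined, and they then exhibit $\{\,,\,\}_{pc}$ as a dimensioned bilinear multiplication homogeneous of dimension $pc$. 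This dimensional consistency check — the one place where $bq=pc$ is genuinely used — is the exact analogue of the corresponding step in the heterogeneous proof.

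It then remains to verify the three Poisson axioms, all on simple tensors by bilinearity. That $(A_G\otimes_{R_G}B_G,*_{pq})$ is a dimensioned commutative algebra is immediate from symmetry and associativity of $*_p$ and $*_q$ together with associativity and commutativity of $G$, which let the dimension label $pq$ reorganise freely. Antisymmetry of $\{\,,\,\}_{pc}$ follows from antisymmetry of $\{\,,\,\}_b$ and $\{\,,\,\}_c$ and commutativity of $*_p,*_q$ by swapping the two terms. The Leibniz identity $\{x,y*_{pq}z\}=\{x,y\}*_{pq}z+y*_{pq}\{x,z\}$ and the Jacobi identity for $\{\,,\,\}_{pc}$ are obtained by expanding on simple tensors and applying the Leibniz and Jacobi identities of $(*_p,\{,\}_b)$ on $A_G$ and of $(*_q,\{,\}_c)$ on $B_G$; this is the computation already carried out for ordinary Poisson algebras (see \cite{fernandes2014lectures}), recovered here for $G=\{1\}$, and it transcribes unchanged because, after the identification $G\times^G G\cong G$, every summand of the Leibniz expansion carries dimension $pq\cdot pc\cdot(\text{product of slice dimensions})$ and every summand of the Jacobi expansion carries dimension $pc\cdot pc\cdot(\text{product of slice dimensions})$, so associativity and commutativity of $G$ together with $bq=pc$ force all terms on each side to agree.

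The main obstacle is purely one of bookkeeping: one must confirm that in the multi-term Jacobi and Leibniz expansions the dimension subscript of every summand coincides, and the condition $bq=pc$ is exactly what reconciles the ``mixed'' terms, those in which one bracket acts on the $A$-factor while a product acts on the $B$-factor. Once this is settled the remaining manipulations are formally identical to the classical case and can be cited rather than redone.
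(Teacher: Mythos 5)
Your proposal is correct and follows essentially the same route as the paper's proof: reduce to the heterogeneous case via the identification $G\times^G G\cong G$, verify that the dimension projections of the two bracket summands agree precisely because $bq=pc$, and then cite the classical Poisson-product computation for the remaining algebraic identities. The extra bookkeeping you supply for the Leibniz and Jacobi expansions is accurate but is left implicit in the paper's (terser) argument.
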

\begin{proof}
We proceed analogously to the proof of Proposition \ref{DimensionedPoissonProductHetero} above with the added simplicity that now all dimension sets are $G$ and the element-wise tensor product is given simply by monoid multiplication since $G\times^G G \cong G$. Taking the dimension projection of the commutative product we obtain:
\begin{equation}
    \delta(a_x *_p a_y' \otimes_{R_G} b_u *_q b_v') = pxy \otimes_G quv = pq xyuv = pq \delta(a_x \otimes_{R_G} b_u ) \delta(a_y' \otimes_{R_G} b_v'),
\end{equation}
hence $*_{pq}$ is a well-defined commutative multiplication of dimension $pq\in G $. For the bracket $\{\,,\,\}_{pc}$ to be well-defined both terms need to have equal dimension projection, this is indeed the case from the condition $bq=pc$ assumed of the pair of dimensioned Poisson algebras:
\begin{equation}
    \delta (\{ a_x, a_y'\}_b \otimes b_u *_q b_v') = bq xyuv = pc xyuv =  \delta (a_x*_p a_y'\otimes \{b_u,b_v'\}_c).
\end{equation}
\end{proof}

\section{Comments and Further Research} \label{conclusion}

The present work is part of a series of papers written by the author on the topic of Jacobi geometry and its generalizations. Jacobi structures on line bundles \cite{vitagliano2015dirac} \cite{tortorella2017deformations} \cite{schnitzer2019thesis} are a natural source of examples of dimensioned Poisson manifolds. In \cite{zapata2020poly} the author sets out to formulate the conventional theory of Jacobi manifolds employing the dimensioned algebra machinery that was developed in the present work as well as to explore natural generalizations suggested by the power functor of Section \ref{power} applied to collections of line bundles.\newline

The prequel paper to the present one \cite{zapata2020unitfree} introduces the notion of line-vector spaces as a unit-free analogue of a vector space. A natural question arises: is the notion of line-vector space related to that of a dimensioned vector space over powers of lines? More generally, what is the relation between the theory of line-vector spaces and dimensioned linear algebra? These questions will be answered in the sequel paper \cite{zapata2020poly}.\newline

We have focused exclusively on the algebraic aspects of dimensioned structures, but what about geometry? If one wishes to replicate ordinary manifold theory but replacing any instance of a coordinate space by its dimensioned analogue, it is possible that genuinely new mathematical objects may be found. Some comments in this direction are given in \cite[Ch. 8 ]{zapata2019landscape}.\newline

Returning to the motivating question about physical quantities: although the dimensioned formalism has been shown to fully account for standard dimensional analysis in general, ordinary physical quantities are always formulated with the particular groups $\Int$ or $\Rat$ as dimension sets. Besides the fact that a very prominent class of examples, i.e. the power dimensioned rings of lines, shares with this choice of dimension set, no further justification has been found to single out $\Int$ or $\Rat$ as canonical choices for dimension sets. Understanding the implications of this choice seems an important angle to further delve into the mathematical foundations of metrology. As a particularly enticing thought: could it be scientifically justifiable or theoretically useful to consider physical theories whose observable quantities have dimensions displaying some exotic algebraic or topological properties?\newline

Lastly, we mention recent work on graded geometry that uses definitions that very closely resemble those of our dimensioned sets and binars. In \cite[Sec. 1]{vysoky2021global} the definitions of \emph{graded sets}, \emph{graded morphisms} and \emph{graded abelian groups} match our definitions of dimensioned set, dimensioned map and dimensioned abelian group respectively by replacing all our generic dimension sets with $\Int$. Interestingly, the \emph{graded tensor product} of graded abelian groups \cite[Sec. 1.1, eq. (3)]{vysoky2021global}, which makes use of the additive structure of the base set $\Int$, differs significantly from our general definition of dimensioned tensor product. Understanding the differences and similarities between the dimensioned and graded formalisms seems like a fruitful line of further research.

\printbibliography

\end{document}